%%%%%%%%%%%%%%%%%%%%%%%%%%%%%%%%%%%%%%%%%%%%%%%%%%%%%%%%%%%%%%%%%%%%%%%%%%%%%%%%
%%%   IF WE WANT TO SWITCH TO LNCS COMMENT LINE 1 BELOW AND SET \lncs TO 1   %%%
%%%   IF WE WANT TO SWITCH TO NORMAL COMMENT LINE 2 BELOW AND SET \lncs to 0 %%%
%%%%%%%%%%%%%%%%%%%%%%%%%%%%%%%%%%%%%%%%%%%%%%%%%%%%%%%%%%%%%%%%%%%%%%%%%%%%%%%%

%LINE 1:
\documentclass[11pt]{article}
%LINE 2:
%\documentclass[runningheads]{llncs}

\usepackage{ifthen}
\def\lncs{0}
\ifthenelse{\equal{\lncs}{1}}
{
\usepackage{amsmath,amssymb,xspace,verbatim,paralist,enumitem}
\usepackage[usenames, dvipsnames]{xcolor}
\usepackage{array,multirow,multicol}
\usepackage{algorithm,algorithmicx}
\usepackage[noend]{algpseudocode}
\usepackage{graphicx}
\usepackage{mathdots,mathtools,mathrsfs}
\usepackage{booktabs}

\usepackage{thmtools}
\usepackage{thm-restate}

\usepackage[normalem]{ulem} % For strking out text

%parameter for enumitem

\setlist[enumerate]{noitemsep}

%%nosep is the same as writing partopsep=0pt, topsep=0pt, itemsep=0pt, parsep=0pt with less keystrokes. Internally the nosep key is defined as

%%\enitkv@key{enumitem}{nosep}[true]{%  
%%  \partopsep=\z@skip
%%  \topsep=\z@skip
%%  \itemsep=\z@skip
%%  \parsep=\z@skip}
%%Optionally one can use noitemsep, which sets itemsep=0pt, parsep=0pt if one wants more space before the list itself.

%%%%% add by deepc

%\usepackage{fullpage}
\usepackage[pagebackref, linktoc=all,colorlinks=true,pdfpagemode=none,urlcolor=blue,linkcolor=blue,citecolor=violet,pdfstartview=FitH]{hyperref}
\usepackage[nameinlink]{cleveref}
% For restating theorems in appendix
%\usepackage{thmtools}
%\usepackage{thm-restate}
%Usage:
%\begin{restatable}[Goldbach's conjecture]{thm}{goldbach}
%\label{thm:goldbach}
%Every even integer greater than 2 can be expressed as the sum of two primes.
%\end{restatable}
%Then type \goldbach* to recall.

\usepackage{subcaption}

%\usepackage{footnote}
%\makesavenoteenv{tabular}
%\makesavenoteenv{table}

\newbox\mybox 
\newdimen\myboxwidth    

\newcommand\addpicture[3]{% 
\setbox\mybox=\hbox{\includegraphics[scale=#3]{#2}}
\myboxwidth\wd\mybox    
\renewcommand\windowpagestuff{% 
\includegraphics[scale=#3]{#2}
\captionof{figure}{A test figure.}}
\parpic[#1]{% 
\begin{minipage}{\myboxwidth}
 \windowpagestuff 
\end{minipage} 
} }

\newcommand{\ignore}[1]{}

%% Calligraphic letters

\newcommand{\cA}{{\cal A}}
\newcommand{\cB}{\mathcal{B}}

\newcommand{\cE}{{\cal E}}
\newcommand{\cF}{\mathcal{F}}

\newcommand{\cI}{{\cal I}}
\newcommand{\cJ}{{\cal J}}

\newcommand{\cX}{{\cal X}}

\newcommand{\R}{\mathbb R}

\newcommand{\eps}{\varepsilon}

\newcommand{\poly}{\mathrm{poly}}

\DeclarePairedDelimiter\ceil{\lceil}{\rceil}
\DeclarePairedDelimiter\floor{\lfloor}{\rfloor}

%%% Hyper-linked References
%\newcommand{\Sec}[1]{\hyperref[sec:#1]{\S\ref*{sec:#1}}} %section
%\newcommand{\Eqn}[1]{\hyperref[eq:#1]{(\ref*{eq:#1})}} %equation
%\newcommand{\Fig}[1]{\hyperref[fig:#1]{Fig.\,\ref*{fig:#1}}} %figure
%\newcommand{\Tab}[1]{\hyperref[tab:#1]{Tab.\,\ref*{tab:#1}}} %table
%\newcommand{\Thm}[1]{\hyperref[thm:#1]{Theorem\,\ref*{thm:#1}}} %theorem
%\newcommand{\Fact}[1]{\hyperref[fact:#1]{Fact\,\ref*{fact:#1}}} %fact
%\newcommand{\Lem}[1]{\hyperref[lem:#1]{Lemma\,\ref*{lem:#1}}} %lemma
%\newcommand{\Prop}[1]{\hyperref[prop:#1]{Prop.~\ref*{prop:#1}}} %property
%\newcommand{\Cor}[1]{\hyperref[cor:#1]{Corollary~\ref*{cor:#1}}} %corollary
%\newcommand{\Conj}[1]{\hyperref[conj:#1]{Conjecture~\ref*{conj:#1}}} %conjecture
%\newcommand{\Def}[1]{\hyperref[def:#1]{Definition~\ref*{def:#1}}} %definition
%\newcommand{\Alg}[1]{\hyperref[alg:#1]{Alg.~\ref*{alg:#1}}} %algorithm
%\newcommand{\Ex}[1]{\hyperref[ex:#1]{Ex.~\ref*{ex:#1}}} %example
%\newcommand{\Clm}[1]{\hyperref[clm:#1]{Claim~\ref*{clm:#1}}} %example

%%%%%% end add by deepc

%\newcommand{\opth}{{\widehat{\mathsf{opt}}}}
%\newcommand{\opt}{{\mathsf{opt}}}

%%% Added by Suman

\DeclareMathOperator{\violate}{\lambda}
\DeclareMathOperator{\approximate}{\rho}
\DeclareMathOperator{\opt}{OPT}

\newcommand{\etal}{{et al.}\xspace}

\newcommand{\mbdmb}{\textsc{MBDMB}\xspace}

\newcommand{\clu}{\textsc{Vanilla $(k , p)$-Clustering}\xspace}
\newcommand{\fairClu}{\textsc{Fair $(k , p)$-Clustering}\xspace}
\newcommand{\fairCluAsgn}{\textsc{Fair $p$-Assignment}\xspace}

\newcommand{\lbclu}{\textsc{LB-$(k , p)$-Clustering}\xspace}

\newcommand{\optv}{\opt_{\mathsf{vnll}}}
\newcommand{\optf}{\opt_{\mathsf{fair}}}
\newcommand{\opta}{\opt_{\mathsf{asgn}}}
\newcommand{\optl}{\opt_{\mathsf{lbnd}}}
\newcommand{\near}{\mathsf{nrst}}

\newcommand{\zv}{\vec{d^*}}
\newcommand{\zf}{\vec{d}}
\newcommand{\zs}{\vec{d'}}
\newcommand{\pnorm}{\mathcal{L}_p}

\newcommand{\LP}{\mathsf{LP}}
\newcommand{\veca}{\vec{\alpha}}
\newcommand{\vecb}{\vec{\beta}}

\DeclareMathOperator{\balance}{balance}

\newcommand\Tstrut {\rule{0pt}{3ex}}         % = `top' strut
\newcommand\Bstrut {\rule[-1.3ex]{0pt}{0pt}}   % = `bottom' strut
}
{
\usepackage{amsthm}
\usepackage{fullpage}
\usepackage{times}
\usepackage{footmisc}
 
\newtheorem{theorem}{Theorem}
\theoremstyle{definition}
\newtheorem{definition}{Definition}
\newtheorem{lemma}[theorem]{Lemma}
\newtheorem{claim}[theorem]{Claim}

\newtheorem{remark}{Remark}
}

% This is samplepaper.tex, a sample chapter demonstrating the
% LLNCS macro package for Springer Computer Science proceedings;
% Version 2.20 of 2017/10/04
%
%%%%%%%%%%%%%%%% IN CASE OF NORMAL, COMMENT THIS UP
%\documentclass[runningheads]{llncs}
%\input{macros}

\usepackage{graphicx}
% Used for displaying a sample figure. If possible, figure files should
% be included in EPS format.
%
% If you use the hyperref package, please uncomment the following line
% to display URLs in blue roman font according to Springer's eBook style:

\begin{document}
\title{Fair Algorithms for Clustering}

\ifthenelse{\equal{\lncs}{1}}{
\author{Suman K. Bera\inst{1} \and
Deeparnab Chakrabarty\inst{1} \and
Maryam Negahbani\inst{1}}
\authorrunning{S.K. Bera et al.}
% First names are abbreviated in the running head.
% If there are more than two authors, 'et al.' is used.
%
\institute{Dartmouth College, Hanover NH 03755, USA}
}
{
\author{Suman K. Bera\footnote{UC Santa Cruz, Email: sbera@ucsc.edu} \and Deeparnab Chakrabarty\footnote{Dartmouth College, Email: deeparnab@dartmouth.edu}
\and Nicolas J. Flores\footnote{Dartmouth College, Email: nicolasflores.19@dartmouth.edu}
\and Maryam Negahbani\footnote{Dartmouth College, Email: maryam@cs.dartmouth.edu}}
}
\date{}

\maketitle              % typeset the header of the contribution

\begin{abstract}
% \ifthenelse{\equal{\lncs}{0}}
% {
% \thispagestyle{empty}
% \setcounter{page}{0}
% }
% {}

We study the problem of finding low-cost {\em fair clusterings} in data where each data point  may 
belong to many protected groups. 
Our work significantly generalizes the seminal work of Chierichetti \etal (NIPS 2017) as follows.
\begin{itemize}[noitemsep]
    \item We allow the user to specify the parameters that define fair representation. More precisely, these parameters define the maximum over- and minimum under-representation of any group in any cluster.
    \item Our clustering algorithm works on any $\ell_p$-norm objective (e.g. $k$-means, $k$-median, and $k$-center). Indeed, our algorithm transforms any vanilla clustering solution into a fair one incurring only a slight loss in quality.
    \item Our algorithm also allows individuals to lie in multiple protected groups. 
    In other words, we do not need the protected groups to partition the data and we can maintain fairness across different groups simultaneously.
\end{itemize}
Our experiments show that on established data sets, our algorithm performs much better in practice than what our theoretical results suggest. 

% We also experimentally show that it is important to ensure fairness across multiple sensitive attributes simultaneously; fairness with respect to one  protected group can be extremely unfair for other protected groups. Finally, we show an empirical study on how the various fairness parameters affect the {\em cost of fairness}.
\end{abstract}

\section{Introduction}

Many important decisions today are made by machine learning algorithms. These range from showing advertisements to customers~\cite{perlich2014machine,googlead}, to awarding home loans~\cite{khandani2010consumer,malhotra2003evaluating}, to predicting recidivism~\cite{propublica,dressel2018accuracy,chouldechova2017fair}. It is important to ensure that such algorithms are {\em fair} and are not biased towards or against some specific groups in the population. A considerable amount of work~\cite{kamishima2012fairness,zemel2013learning, CKLV18,joseph2016fairness,celis2018,ZafarVGG17,yang2017measuring} addressing this issue has emerged in the recent years.
%In the last few years, considerable work~\cite{this, this, and, that} has tackled this problem in various settings such as% classification~\cite{ZafarVGG17}, ranking~\cite{vish}, and clustering~\cite{CKLV18}.

Our paper considers fair algorithms for clustering. Clustering is a fundamental unsupervised learning problem where one wants to partition a given data-set. 
% Oftentimes, these data points are embedded in some (metric) space where the distance between two points is a measure of their (dis)similarity, and a good clustering is defined as  a partition minimizing a certain distance based objective. 
In machine learning, clustering is often used for feature generation and enhancement as well. It is thus important to consider the bias and unfairness issues when inspecting the quality of clusters.
%
%Arguably, the ``$k$-means'' objective is one of the most widely used ones. 
The question of fairness in clustering was first asked in the beautiful paper of Chierichetti~\etal~\cite{CKLV18} with subsequent generalizations by R\"osner and Schmidt~\cite{RS18}. 

%\begin{quote}
	\emph{
		In this paper, we give a much more generalized and tunable notion of fairness in clustering than that in~\cite{CKLV18,RS18}.
		Our main result is that any solution for a wide suite of vanilla clustering objectives can be transformed into fair solutions in our notion with only a slight loss in quality by a simple algorithm. %\smallskip
	}

%\end{quote}
%\noindent

%Before precisely stating our results, we briefly discuss the notion of fairness. 
Many works in fairness~\cite{celis2018,CKLV18,RS18,celis2018multiwinner} work within the disparate impact (DI) doctrine~\cite{feldman2015certifying}. Broadly speaking, the doctrine posits that any ``protected class'' must have approximately equal representation in the decisions taken (by an algorithm).
% Motivated thus, Chierichetti~\etal~\cite{CKLV18} propose the model in which the set of points is partitioned into two colors, red and blue, representing two groups of people corresponding to one protected feature, and the {\em constraint} on a clustering algorithm respecting the DI doctrine is that in every cluster, the {\em balance} of red and blue points (the minimum of the red-to-blue and blue-to-red ratio) should be close to the balance in the original dataset.
%The algorithmic question that arises is: can one design clustering algorithms with these new set of constraints? 
Although the DI doctrine is a law~\cite{401,EEOC} in the United States, violating the DI doctrine is by itself {\em not} illegal~\cite{AC05}; it is illegal only if the violation cannot be justified by the decision maker. 
 In the clustering setting, this translates to the following algorithmic question : what is the loss in quality of the clustering when all protected classes are required to have approximately equal representation in the clusters returned?

% The DI doctrine is indeed a law~\cite{401,EEOC} in the United States, and consequently many works on algorithmic fairness~\cite{celis2018multiwinner,celis2018,CKLV18,RS18,feldman2015certifying,ZafarVGG17} have focused on this. At this point, we should point out to the reader that violating the DI doctrine is by itself {\em not} illegal~\cite{AC05}; it is illegal only if the violation cannot be justified by the decision maker. Again, in the clustering setting, the justification could be that ensuring the fairness constraints leads to a significant drop in the quality of clustering. Nevertheless, even to justify the reasons for disparity, one needs to answer the algorithmic question mentioned in the last line of the previous paragraph. 

%\subsection{Perspective and our contributions} 
Motivated thus, Chierichetti~\etal~\cite{CKLV18}, and later R\"osner and Schmidt~\cite{RS18}, model the set of points as partitioned into $\ell$ colors, and the color proportion of each returned cluster should be similar to that in the original data. There are three shortcomings of these papers: (a) the fairness constraint was too stringent and brittle, (b) good algorithms were given only for the $k$-center objective, and (c) the color classes weren't allowed to overlap. We remark that the last restriction is limiting since an individual can lie in multiple protected classes (consider an African-American senior woman). In our work we address all these concerns: we allow the user to specify the fairness constraints, we give simple algorithms with provable theoretical guarantees for a large suite of objective functions, and we allow overlapping protected classes.

{\bf Our fairness notion.} 
We propose a model which extends the model of~\cite{CKLV18} to have $\ell \ge 2$  groups of people which are allowed to overlap.
For each group $i$, we have two parameters $\beta_i,\alpha_i \in [0,1]$.
%but any single individual is in at most $\Delta$ groups. So, if the groups were disjoint, $\Delta=1$. The $\Delta$ shows up in the quality of our solution.
Motivated by the DI doctrine, we deem a clustering solution {\em fair} if each cluster satisfies two properties: (a) {\em restricted dominance (RD)}, which asserts that the fraction of people from group $i$ in any cluster is at most $\alpha_i$, and (b) {\em minority protection (MP)}, which asserts that the fraction of people from group $i$ in any cluster is at least $\beta_i$.
Note that we allow $\beta_i,\alpha_i$'s to be arbitrary parameters, and furthermore, they can differ across different groups.
This allows our model to provide a lot of flexibility to users. For instance, our model easily captures
% \footnote{To see this, note that with two groups with balance $\geq 1/t$ translates to $\beta_i = 1/(1+t)$ and $\alpha_i = t/(1+t)$ for both groups. 
% For multiple groups where balance has to be the original balance, we set $\beta_i = \alpha_i$ to be the group $i$'s fraction in the original population.\label{ll}} 
the notions defined by~\cite{CKLV18} and~\cite{RS18}. 

We allow our protected groups to overlap. Nevertheless, the quality of our solutions depend on the amount of overlap. We define $\Delta$ 
 (similar to~\cite{celis2018}) to be the maximum number of groups a single individual can be a part of. This parameter, as we argued above, is usually not $1$, but can be assumed to be a small constant depending on the application.

 %\medskip

%\noindent
{\bf Our results.}
Despite the generality of our model, we show that \emph{in a black-box fashion}, we can get fair algorithms for {\em any} $\ell_p$-norm objective (this includes, $k$-center, $k$-median, and the widely used $k$-means objective) if we allow for very small additive violations to the fairness constraint. 
%For simplicity, let's stick to $\Delta = 1$, that is, every individual is in exactly one protected group.
We show that \emph{given any $\rho$-approximation algorithm $\cA$ for a given objective which could be returning widely unfair clusters, we can return a solution which is a $(\rho+2)$-approximation to the best clustering which satisfies the fairness constraints} (\Cref{thm:fairclu}). Our solution, however, can violate both the RD and MP property {\em additively} by $4\Delta+3$. This is negligible if the clusters are large, and our empirical results show this almost never exceeds $3$. Further in our experiments, our cost is at most 15\% more than optimum, which is a much better factor compared to $(\rho+2)$.
%
% as follows: for any cluster, the {\em number} of people from group $i$ is at most $\alpha_i$ times the {\em total number} of people in that cluster plus $3$, and is at least $\beta_i$ times the total number of people in that cluster minus $3$. Therefore, if the cardinality of the cluster is large, the error we make is negligible.
% If $\Delta \neq 1$, our additive violation becomes $4\Delta+3$ which, again, is small in many applications. 
%In particular, we can get (small) constant factor approximations for the fair $k$-means problem in our general model.

The black-box feature of our result is useful also in {\em comparing} the performance of any particular algorithm $\cA$. This helps if one wishes to justify the property of an algorithm one might be already using. Our results can be interpreted to give a way to convert any clustering algorithm to its fair version. Indeed, our method is very simple -- we use the solution returned by $\cA$ to define a {\em fair assignment} problem and show that this problem has a good optimal solution. The fair assignment problem is then solved via iterative rounding which leads to the small additive violations. In the case of $\Delta=1$ (disjoint groups), we can get a simpler, one-iteration rounding algorithm.

%We believe the simplicity of our approach to be a feature which can be especially useful in practice. A follow-up work of is exploring this avenue.\smallskip

% Our empirical evaluations show that, in practice, the additive violation is always very small (less than 3 even for $\Delta=2$). Also, in many cases, even though the vanilla solution is quite unfair, a re-assignment of points can make it fair with a small increase in the cost. Furthermore, we observe that clustering that are fair with respect to one protected attribute can be arbitrarily unfair with respect to another one. 
% Therefore, one requires to consider multiple fairness attributes simultaneously to be completely fair.
% Our algorithm manages to ensure fairness across all the attributes simultaneously with not much of an increase in cost. In the last set of experiments, we provide plots that show how \emph{the cost of fairness} can change depending on how we choose $(\beta_i,\alpha_i)$'s, that is, on how strict we are in terms of fairness. With this, we intend to help the user with choosing the right fairness parameters.

Finally, we show that our simple approach also leads to algorithms for a related clustering problem.
In many clustering applications involving anonymity and privacy~\cite{AggarwalFKKPTZ06,RS18}, one requires the size of the cluster to be {\em at least} a certain size $L$. We show that \emph{given any $\rho$-approximation for the vanilla clustering problem in any $\ell_p$ norm, we can get a $(\rho+2)$-approximation for the lower bounded clustering problem in $O(2^k\poly(n))$ time} (\Cref{thm:lbfpt}). Thus, our algorithm is a {\em fixed-parameter tractable (FPT)} approximation algorithm and in particular, armed with the recent result of Cohen-Addad \etal \cite{CohenGKLL19}, implies a $3.736$-factor approximation algorithm for the lower bounded $k$-median problem in $k^{O(k)}\poly(n)$ time. 
To put this in perspective, in polynomial time one can only get a large $O(1)$-approximation (see~\cref{fn:lb}).
Furthermore, for higher norms, no constant factor approximations are known.\footnote{For the special case of Euclidean $k$-means, there are PTASes in \cite{DX15,BJK18} with run times exponential in $k$.}
% \ifthenelse{\equal{\lncs}{1}}{
% Due to space constraints, we move the lower bounded clustering results to the allowed 2 pages of appendix.
% }{}
% \smallskip

\textbf{Comparison with recent works.}
In a very recent independent and concurrent work, Schmidt \etal ~\cite{schmidt2018fair}
consider the fair $k$-means problem in the
streaming model with a notion of fairness similar to ours.
However, their results crucially assume that the underlying 
metric space is Euclidean.
Their main contributions  are defining ``fair coresets'' 
and showing how to compute them in a streaming setting,
resulting in significant reduction in the input size. 
Although their coreset construction algorithm works with arbitrary
number of groups, their fair $k$-means algorithms 
assume there are only two disjoint groups of equal size. 
% Note that this is the $t=1$ setting for which 
% Chierichetti \etal's~\cite{CKLV18} work 
% gives $O(1)$-approximation for both fair $k$-center 
% and fair $k$-median problems.
Even for this, 
Schmidt \etal ~\cite{schmidt2018fair} give an $(5.5\rho+1)$-approximation,
given any $\rho$-approximation 
for the vanilla $k$-means problem; the reader should compare with our $(\rho+2)$-approximation.
Backurs \etal~\cite{Backurs2019}
consider the problem of designing scalable 
algorithm for the fair k-median problem in the Euclidean space. 
The notion of fairness is {\em balance}, as defined by Chierichetti
\etal~\cite{CKLV18}, and hence works only for two disjoint groups.
Their approximation ratio is $O_{r,b}(d\log n)$ where $r$ and $b$ are 
fairness parameters, and $d$ is the dimension of the Euclidean space.
In contrast, our fair $k$-means and $k$-median algorithms works in 
any metric space, with arbitrary number of overlapping groups. 

In another independent and parallel work, 
Bercea \etal~\cite{Bercea2018} consider
a fairness model that is similar to ours. They give a similar, 
but arguably more complicated algorithm for a variety of 
clustering objectives. Ahmadian \etal~\cite{sara2019clustering}
study the k-center objective with 
only {\em restricted dominance (RD)} type constraints and give
bi-criteria approximations.
In comparison, we emphasize on a 
simple, yet powerful unifying framework that can handle
any $\ell_p$-norm objective.
None of the above works handle overlapping groups.
%$\Delta > 1$ case,
%and their presentation lacks experimental validation.
%({\bf deepc: I think one can merge this para with the previous. Maybe start with the Bercea et al work})

\subsection{Other related works}\label{sec:relwork}
Fairness in algorithm design has received a lot of 
attention lately~\cite{calders2010three,luong2011k,dwork2012fairness,feldman2015certifying,
	kamishima2012fairness,zemel2013learning,CKLV18,joseph2016fairness,celis2018,ZafarVGG17,yang2017measuring,celis2018,celis2018multiwinner,corbett2017algorithmic,KleinbergMR17,FriedlerSV16}.
Our work falls in the category of designing fair algorithms, and as mentioned, we 
concentrate on the notion of {\em disparate impact}. 
Feldman \etal~\cite{feldman2015certifying} and Zafar \etal~\cite{ZafarVGG17} 
study the fair classification problem under this notion. 
Celis~\etal in ~\cite{celis2018}, Celis~\etal in~\cite{celis2018multiwinner}, and Chierichetti \etal in~\cite{chierichetti19a}
study respectively the fair ranking problem,
the multiwinner voting problem, and the matroid optimization problem; All of these works model fairness through {\em disparate impact}. 
Chierichetti \etal in~\cite{CKLV18} first addresses {\em disparate impact} for clustering problems
in the presence of two groups, R{\"o}sner and Schmidt~\cite{RS18} 
generalizes it to more than two groups.

Chen \etal~\cite{chen2019proportionally} define a notion of 
proportionally fair clustering where all possible groups of 
reasonably large size are entitled to choose a center for themselves.
This work builds on the assumption that sometimes the task of identifying
protected group itself is untenable. Kleindessner \etal in~\cite{Kleindessner2019} study the problem of enforcing fair 
representation in the data points chosen as cluster center. This problem
can also be posed as a matroid center problem. Kleindessner \etal 
in~\cite{Kleindessner2019spectral} extends the fairness notion to 
graph spectral clustering problems. Celis~\etal in~\cite{Celis:2019}
proposes a meta algorithm for the classification problem under 
a large class of fairness constraints with respect to 
multiple non-disjoint protected groups.

Clustering is a ubiquitous problem and has been extensively studied in diverse communities (see~\cite{aggarwal2013data} for a recent survey). We focus on the work done in the algorithms and optimization community for clustering problems under $\ell_p$ norms.
%
%The problem of data clustering has a rich literature. It has been
%studied extensively by diverse research communities in many forms, and 
%books by Aggarwal and Reddy~\cite{aggarwal2013data}, and Xu and Wunsch~\cite{xu2008clustering} 
%serve as an excellent survey into this area.
%In our work, the emphasis is on the distance based clustering problems. 
The $p=\{1,2,\infty\}$ norms, that is the $k$-median, $k$-means, and $k$-center problems respectively, have been extensively studied.
The $k$-center problem has a $2$-approximation~\cite{HochbaumS85a,Gonzalez85} and it is NP-hard to do better~\cite{hsu1979easy}.
A suite of algorithms~\cite{CharikarGST02,JainV01,CharikarG99,AryaGKMMP04,LiS16} for the $k$-median problem has culminated in a 
$2.676$-approximation~\cite{ByrkaPRS14}, and is still an active area of research. For $k$-means, the best algorithm is a $9+\eps$-approximation due to Ahmadian~\etal~\cite{AhmadianNSW17}. For the general $p$-norm, most of the $k$-median algorithms imply a constant approximation.

% {\bf deepc: in space crunch we can remove this para}
Capacitated clustering is similar to fair clustering in that in both, the assignment is not implied by the set of centers opened.
We already mentioned the results for {\em lower bounded} clustering. One can also look at {\em upper bounded} clustering where every cluster is at most a size $U$. The (upper-bounded) {\em capacitated $k$-median} problem is one of the few classic problems remaining for which we do not know $O(1)$-approximations, and neither we know of a good hardness. The capacitated $k$-center problem has a $6$-approximation~\cite{KS00}. Recently, an FPT algorithm was designed by~\cite{Adamczyk18}; They show a $7+\eps$-approximation for the upper bounded capacitated $k$-median problem which runs in time $O(f(k)\cdot \poly(n))$ where $f(k) \sim k^{O(k)}$. It is instructive to compare this with our result on lower bounded $k$-median problem.

\section{Preliminaries}
\label{sec:prelim}
%In this section we formally define the problems we study and the theoretical results we prove.
%discussed in this paper
%and then bring a formal statement of our results.
%
Let $C$ be a set of points (whom we also call ``clients'') we want to cluster.
Let these points be embedded in a metric space $(\cX,d)$. We let $F \subseteq \cX$ be the set of possible cluster center locations (whom we also call ``facilities''). Note $F$ and $C$ needn't be disjoint, and indeed $F$ could be equal to $C$.
% Throughout the paper, $(\cX,d)$ is a metric space.
% The set $\cX$ is divided into $F\cup C$ (not necessarily disjoint) where $F$ is the set of possible centers, and $C$ is the set of 
%with a distance function $d: \cX \times \cX \rightarrow \RR_{\geq 0}$. 
For a set $S\subseteq \cX$ and a point $x\in \cX$, we use $d(x,S)$ to denote $\min_{y\in S} d(x,y)$.
%the minimum distance from $x$ to any point in $S$.
For an integer $n$, we use $[n]$ to denote the set $\{1,2,\ldots,n\}$.

	Given the metric space $(\cX,d)$ and an integer parameter $k$, in the \clu problem  
	the objective is to (a) {\em ``open''} a subset $S\subseteq F$ 
	of at most $k$ facilities, and (b) find an {\em assignment} $\phi: C\to S$ of clients to open facilities so as to minimize  $
    \pnorm(S;\phi) := \Big(\sum_{v\in C} d(v,\phi(v))^p\Big)^{\frac{1}{p}}.$ Indeed, in this vanilla version with no fairness considerations, every point $v\in C$ would be assigned to the closest center in $S$.
    The case of $p=\{1,2,\infty\}$, the $k$-median, $k$-means, and $k$-center problems respectively, have been extensively studied in the literature~\cite{HochbaumS85a,Gonzalez85,CharikarGST02,JainV01,CharikarG99,AryaGKMMP04,LiS16,ByrkaPRS14,AhmadianNSW17}. Given an instance $\cI$ of the \clu problem, we use $\optv(\cI)$ to denote its optimal value.

\noindent
The next definition formalizes the fair clustering problem which is the main focus of this paper.

\begin{definition}[\fairClu Problem]\label{def:fairClu}
	In the fair version of the clustering problem, one is additionally given 
	$\ell$ many (not necessarily disjoint) {\em groups} of $C$, namely $C_1,C_2,\ldots,C_{\ell}$. 
	We use $\Delta$ to denote the maximum number of groups a single client $v\in C$ can belong to; so if the $C_j$'s were disjoint we would have $\Delta = 1$.
	One is also given two {\em fairness vectors} $\veca, \vecb \in [0,1]^{\ell}$.

The objective is to (a) \emph{open} a subset of facilities $S \subseteq F$ of at most $k$ facilities, and (b) find an {\em assignment} $\phi: C\to S$ of clients to the open facilities so as to minimize
$\pnorm(S;\phi)$, where $\phi$ satisfies the following {\em fairness constraints}.
%and find an assignment function 
%$\phi: C \rightarrow{ S}$ of clients to the opened facilities
%such that the \clu's objective $(\sum_{v\in C} d(v,\phi(v))^p)^{1/p}$ is minimized
%and the following two sets of fairness constraints
%(\cref{fairclu:majority,fairclu:minority}) are satisfied:
\begin{alignat}{4}
 \Big| \{v \in C_i : \phi(v)=f \}\Big| && ~~\leq~~ &&  \alpha_i \cdot \Big|\{v \in C: \phi(v)=f \}\Big|, && \quad
& \forall f\in S\,, \forall i \in [\ell] \,, \label{eqn:fairclu:majority} \tag{RD} \\
 \Big| \{v \in C_i : \phi(v)=f \}\Big| && ~~\geq~~ &&  \beta_i \cdot \Big|\{v \in C: \phi(v)=f \}\Big|, && \quad
& \forall f\in S\,, \forall i \in [\ell] \,, \label{eqn:fairclu:minority} \tag{MP} 
\end{alignat}
The assignment $\phi$ defines a cluster $\{v:\phi(v)=f\}$ around every open facility $f\in S$.
As explained in the Introduction,~\cref{eqn:fairclu:majority} is the {\em restricted dominance} property which upper bounds the ratio of any 
group's participation in a cluster, and~\cref{eqn:fairclu:minority} is the {\em minority protection} property which lower bounds this ratio to protect against under-representation.
Due to these fairness constraints, we can no longer  assume $\phi(v)$ is the nearest open facility in $S$ to $v$. Indeed, we use the tuple $(S,\phi)$ to denote a fair-clustering solution.

We use $\optf(\cI)$ to denote the optimal value of any
instance $\cI$ of the \fairClu problem. Since $\cI$ is also an instance of the vanilla problem, and since every fair solution is also a vanilla solution (but not necessarily vice versa) 
we get 
$\optv(\cI) \leq \optf(\cI)$ for any $\cI$.
%
%
%\begin{enumerate}[label={F.\arabic*}]
%    \item Restricted Dominance:  
%        \begin{align}
%            \lvert \{v \in C_i : \phi(v)=f \} \rvert 
%            &\leq \alpha_i \cdot \lvert \{v \in C: \phi(v)=f \} \rvert 
%            & \forall f\in S\,, \forall i \in [\ell] \,, \label{eqn:fairclu:majority} \tag{RD}
%        \end{align} \label{fairclu:majority}
%    \item Minority Protection: 
%        \begin{align}
%            \lvert \{v \in C_i : \phi(v)=f \} \rvert 
%            &\geq \beta_i \cdot \lvert \{v \in C: \phi(v)=f \} \rvert 
%            & \forall f\in S\,, \forall i \in [\ell] \,. \label{eqn:fairclu:minority} \tag{MP}
%        \end{align} \label{fairclu:minority}
%\end{enumerate}
\end{definition}
%Informally, for every open facility $f$, the number of
%points from any group $C_i$ for all $i\in [\ell]$ assigned to 
%$f$ should be at most $\alpha_i$ fraction and 
%at least $\beta_i$ fraction of the total
%number of points assigned to $f$. The first set of fairness constraints
%ensure that no group is over-represented in any cluster while the
%second set of constraints protect against under-representation.
%
%A solution to the \fairClu problem is denoted by $(S,\phi)$. 
%From here on, we use the notation $\optf$ as the optimal 
%value of the objective function for an instance of the \fairClu problem.

%\begin{remark}
%Any \fairClu instance corresponds to a \clu instance by removing the fairness constraints. This instance is called the corresponding \clu instance or the corresponding vanilla instance.
%Furthermore, the cost of the corresponding vanilla instance is at most the cost of the original \fairClu instance.
%\end{remark}

%Note that, for the vanilla version,
%it is sufficient to output $S$ as the solution,
%since every client 
%is assigned to the closest facility in $S$. 
%Hence, both the assignment function and the 
%cost of the solution can be derived from $S$. However,
%for the \fairClu problem, we must
%specify the assignment function explicitly, 
%as some clients might have to be assigned to a 
%distant facility in order
%to maintain fairness. In fact, in \cref{sec:reduction} we show
%a reduction from \fairClu to the following assignment problem:
\noindent
A fair clustering solution $(S,\phi)$ has $\lambda$-{\em additive} violation, if the~\cref{eqn:fairclu:majority} and~\cref{eqn:fairclu:minority} constraints are satisfied upto $\pm \lambda$-violation. More precisely, for any $f\in S$ and for any group $i\in [\ell]$, we have 
	\begin{equation}\label{eq:viol}
	 \beta_i \cdot \Big|\{v \in C: \phi(v)=f \}\Big| - \lambda \le \Big| \{v \in C_i : \phi(v)=f \}\Big| \leq  \alpha_i \cdot \Big|\{v \in C: \phi(v)=f \}\Big| + \lambda \tag{V}
	\end{equation}

\noindent
Our main result is the following.
\begin{restatable}{theorem}{ThmfairkCent}
	\label{thm:fairclu}
	Given a $\approximate$-approximate algorithm $\cA$ for the \clu problem,
	we can return a $(\approximate+2)$-approximate solution $(S,\phi)$ with $(4\Delta+3)$-additive violation for the \fairClu problem.
\end{restatable}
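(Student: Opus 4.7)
My plan is a two-phase algorithm. First I would run $\cA$ on the underlying vanilla instance to obtain a set $S\subseteq F$ of at most $k$ centers whose vanilla cost $\bigl(\sum_{v\in C} d(v,S)^p\bigr)^{1/p}$ is at most $\rho\cdot\optv\le\rho\cdot\optf$, and then I would produce a (near-)fair assignment of $C$ to those centers; call this subproblem \fairCluAsgn on $S$.

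Before addressing \fairCluAsgn algorithmically, I would first bound its optimal value. Let $(S^\star,\phi^\star)$ be an optimum fair clustering, so $\pnorm(S^\star;\phi^\star)=\optf$. Define a reference assignment $\psi:C\to S$ by sending each $v$ to the point of $S$ nearest to $\phi^\star(v)$. Since $\psi(v)$ is no farther from $\phi^\star(v)$ than $v$'s nearest point of $S$ is, two triangle inequalities yield
$$d(v,\psi(v)) \;\le\; d(v,\phi^\star(v)) + d(\phi^\star(v),\psi(v)) \;\le\; 2\,d(v,\phi^\star(v)) + d(v,S).$$
Minkowski's inequality then gives $\pnorm(S;\psi)\le 2\,\optf+\rho\,\optf=(\rho+2)\,\optf$. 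Moreover, $\psi$ is strictly fair: every cluster $\psi^{-1}(f)$ is the disjoint union of the optimal clusters $(\phi^\star)^{-1}(g)$ over those $g\in S^\star$ whose nearest point in $S$ is $f$, and because \eqref{eqn:fairclu:majority} and \eqref{eqn:fairclu:minority} are linear in cluster sizes, a union of fair clusters remains fair. Hence \fairCluAsgn on $S$ admits a feasible solution of cost at most $(\rho+2)\,\optf$.

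For the algorithmic step I would write the natural LP relaxation of \fairCluAsgn---variables $x_{v,f}\in[0,1]$ with $\sum_f x_{v,f}=1$, the \eqref{eqn:fairclu:majority} and \eqref{eqn:fairclu:minority} inequalities with cluster sizes replaced by the fractional masses $\sum_v x_{v,f}$, and objective $\sum_{v,f}x_{v,f}\,d(v,f)^p$---and round it iteratively. The reference assignment $\psi$ certifies that the LP value is at most $\bigl((\rho+2)\,\optf\bigr)^p$. At each iteration I would compute a basic feasible solution on the current residual polytope, permanently fix any $0/1$ variables, and drop any fairness constraint whose remaining fractional support has become small. Because dropping constraints only relaxes the LP, the eventual integer solution inherits the LP cost bound and gives $\pnorm(S;\phi)\le(\rho+2)\,\optf$. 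The additive violation arises only when a fairness constraint is dropped: only the surviving fractional variables can subsequently move, and since each client lies in at most $\Delta$ groups, the interaction between any single rounded client and the fairness inequalities at its facility is $\Delta$-sparse. A vertex/sparsity count on the residual polytope propagates this into the claimed additive slack of $4\Delta+3$ per facility.

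The hardest part will be the iterative-rounding analysis itself, specifically two claims: (i) at every iteration a basic feasible solution offers either an integer variable to fix or a fairness constraint with at most $O(\Delta)$ fractional variables in its support to drop, proved by counting tight assignment constraints against non-zero variables; and (ii) the cumulative additive slack at any open facility never exceeds $4\Delta+3$ irrespective of the order of operations. A secondary technicality is the \kcenter case ($p=\infty$), where I would first guess the optimal radius $\opth$ and restrict the LP to pairs with $d(v,f)\le\opth$, so that the rounded assignment's $\ell_\infty$ cost is preserved automatically; the rounding template itself is unchanged.
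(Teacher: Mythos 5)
Your first phase --- running $\cA$ to fix $S$, then certifying that the induced \fairCluAsgn instance has a strictly fair solution of cost at most $(\rho+2)\optf$ via the map $v\mapsto\near(\phi^\star(v))$, two triangle inequalities, and the observation that a disjoint union of fair clusters is fair --- is exactly the paper's reduction (\Cref{fairclutoasgnthm}, \Cref{lemma:lowcost}) and is complete as written.

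The rounding phase, however, has a genuine gap, and it sits precisely where you wave at ``a vertex/sparsity count on the residual polytope.'' You propose to iteratively round the LP that carries the \emph{ratio} constraints $\beta_i\sum_{v\in C}x_{v,f}\le\sum_{v\in C_i}x_{v,f}\le\alpha_i\sum_{v\in C}x_{v,f}$ and to ``drop any fairness constraint whose remaining fractional support has become small.'' This does not work directly: the support of a ratio constraint for group $i$ at facility $f$ is \emph{all} of $\{x_{v,f}:v\in C\}$, not just the variables of clients in $C_i$, because the right-hand side involves the full cluster mass. So the standard counting argument never certifies that such a constraint has $O(\Delta)$ fractional variables, and even if you dropped it, the eventual violation of $|C_i(f)|\le\alpha_i|C(f)|$ depends on how the \emph{denominator} $|C(f)|$ drifts during later iterations, which the support of that one constraint does not control. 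The paper's missing idea is a two-stage discretization: solve the ratio LP once, \emph{freeze} the fractional totals $T_f=\sum_v x^\star_{v,f}$ and $T_{f,i}=\sum_{v\in C_i}x^\star_{v,f}$, and replace the ratio constraints by fixed cardinality windows $\floor{T_{f,i}}\le\sum_{v\in C_i}x_{v,f}\le\ceil{T_{f,i}}$ together with a separate window $\floor{T_f}\le\sum_{v\in C}x_{v,f}\le\ceil{T_f}$ on the whole cluster. These are hyperedge-degree constraints of a hypergraph in which each variable $(v,f)$ lies in at most $\Delta+1$ hyperedges, so iterative rounding (equivalently, the degree-bounded matroid basis theorem of Kir\'aly et al.) yields an integral assignment of no greater cost violating each window by at most $2(\Delta+1)-1=2\Delta+1$. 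The constant $4\Delta+3$ then comes from a final accounting step you omit entirely: with $\overline{T}_{f,i}\le\ceil{T_{f,i}}+2\Delta+1\le\alpha_i\floor{T_f}+2\Delta+2$ and $\floor{T_f}\le\overline{T}_f+2\Delta+1$, one gets $\overline{T}_{f,i}\le\alpha_i\overline{T}_f+(4\Delta+3)$, i.e.\ the numerator's violation and the denominator's violation \emph{add}. Without the freezing step and this two-sided bookkeeping, neither the feasibility of the iterative scheme nor the specific constant $4\Delta+3$ follows from what you have written.
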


In particular, we get $O(1)$-factor approximations to the \fairClu problem with $O(\Delta)$ {\em additive} violation, for any $\ell_p$ norm\footnote{We cannot find an explicit reference for a vanilla $O(1)$-approximation for general $\ell_p$ norms, but it is not too hard to see that many $k$-median algorithms such as those of~\cite{CharikarGST02} and~\cite{JainV01} imply $O(1)$-approximations for any norm. The only explicit mention of the $p$-norm clustering we could find was the paper~\cite{GuptaT08}; this shows that {\em local search} gives a $\Theta(p)$ approximation.\label{fn:pn}}.
Furthermore, for the important special case of $\Delta = 1$, our additive violation is at most $+3$.

Our technique also implies algorithms for the {\em lower bounded $k$-clustering problem}. In this, there is no fairness constraint; rather, the constraint is that if a facility is opened, {\em at least} $L$ clients must be assigned to it. 
In this problem also, a client need not be assigned to its nearest open facility. The problem has been studied in the facility location version and (large) $O(1)$-factor algorithms are known. 
We can easily get  the following.
\begin{restatable}{theorem}{ThmLB}\label{thm:lbfpt}
	Given a $\approximate$-approximate algorithm $\cA$ for the \clu problem that runs in time $T$, there is a $(\approximate + 2)$-approximation algorithm for the \lbclu problem that 
	runs in time $O(T + 2^k\cdot\poly(n))$.
\end{restatable}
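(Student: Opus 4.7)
The plan is to reuse the black-box template behind \Cref{thm:fairclu}: first run $\cA$ on the vanilla instance, obtain a set of centers $S_\cA\subseteq F$ with $|S_\cA|\le k$ and vanilla cost at most $\rho\cdot\optv\le\rho\cdot\optl$, and then search exhaustively for a good lower-bounded assignment onto a subset of $S_\cA$. Because the LB constraint merely forbids opened centers from holding fewer than $L$ clients, I simply enumerate all $2^k$ subsets $Q\subseteq S_\cA$; for each $Q$, solve the residual \LBMedAsgn problem of assigning every $v\in C$ to a center in $Q$ with $|\phi^{-1}(f)|\ge L$ for each $f\in Q$ while minimizing $\pnorm$. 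The algorithm outputs the cheapest feasible candidate over all $Q$.

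The heart of the argument is exhibiting one subset $Q^\star\subseteq S_\cA$ whose residual optimum is at most $(\rho+2)\cdot\optl$. Let $(S^*,\phi^*)$ be an optimal \lbclu solution and, for each $f^*\in S^*$, let $g(f^*)$ be a nearest point to $f^*$ in $S_\cA$. Set $\psi(v)=g(\phi^*(v))$ and $Q^\star=\mathrm{image}(\psi)\subseteq S_\cA$. Lower-boundedness of $\psi$ is automatic since $\psi^{-1}(f)\supseteq\bigcup_{f^*:\,g(f^*)=f}(\phi^*)^{-1}(f^*)$ is a disjoint union of $\phi^*$-clusters, each of size at least $L$. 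For the cost, letting $f_\cA(v)$ denote the center $\cA$ assigned to $v$, minimality of $g(f^*)$ inside $S_\cA$ gives $d(f^*,g(f^*))\le d(f^*,f_\cA(v))\le d(v,f^*)+d(v,f_\cA(v))$, so by the triangle inequality $d(v,\psi(v))\le 2\,d(v,\phi^*(v))+d(v,f_\cA(v))$ for every client $v$. Applying Minkowski's inequality to the $|C|$-dimensional distance vectors then yields $\pnorm(Q^\star;\psi)\le 2\,\pnorm(S^*;\phi^*)+\pnorm(S_\cA;\phi_\cA)\le 2\,\optl+\rho\,\optl=(\rho+2)\,\optl$, as desired.

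Finally, the residual \LBMedAsgn subproblem on a fixed $Q$ is a transportation problem with lower bounds on the center side, encodable as a min-cost flow with unit-capacity edges from a source to each client, cost-$d(v,f)^p$ edges from clients to centers in $Q$, and edges from each $f\in Q$ to the sink carrying a lower bound of $L$; the resulting constraint matrix is totally unimodular, yielding an integral, polynomial-time optimum for every finite $p$, while the $p=\infty$ case reduces to a feasibility check after guessing the optimal radius from among the $O(n^2)$ candidate pairwise distances. Summing across the $2^k$ subsets, the overall running time is $O(T+2^k\poly(n))$. The only delicate point is justifying why enumeration inside $S_\cA$ suffices rather than among all subsets of $F$; this is exactly what the existence argument for $Q^\star$ provides, and I expect the main sanity check while writing up to be confirming that the per-client triangle inequality combines cleanly through Minkowski with the right constants.
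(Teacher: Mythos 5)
Your proposal is correct and follows essentially the same route as the paper's proof: run $\cA$, enumerate all $2^k$ subsets of the returned centers, solve each residual lower-bounded assignment as a min-cost bipartite matching/flow, and certify one good subset via the nearest-facility map $\near(\phi^*(v))$, the per-client bound $d(v,\psi(v))\le 2d(v,\phi^*(v))+d(v,\phi(v))$, and monotonicity of the $\ell_p$ norm. The only (welcome) addition is your explicit handling of the $p=\infty$ case by guessing the radius, which the paper relegates to a remark elsewhere.
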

\noindent
In particular, armed with the recent result of Cohen-Addad \etal \cite{CohenGKLL19}, implies a $3.736$-factor approximation algorithm for the lower bounded $k$-median problem in $k^{O(k)}\poly(n)$ time. 

\section{Algorithm for the \fairClu problem}
\label{sec:algo}
Our algorithm is a simple two step procedure. 
First, we solve the \clu problem using some algorithm $\cA$, and fix the centers $S$ opened by
$\cA$. Then, we solve a {\em fair reassignment problem}, called 
\fairCluAsgn problem, on the same set of facilities to get assignment $\phi$. 
We return $(S,\phi)$ as our fair solution.
% We prove the theoretical results of our algorithm in~\Cref{sec:reduction}. In particular, we prove~\Cref{thm:fairclu} assuming ~\Cref{thm:fairAssgn}. In~\Cref{sec:assignment} we discuss how to solve the \fairCluAsgn problem and prove~\Cref{thm:fairAssgn}.

\begin{definition}[\fairCluAsgn Problem]
	In this problem, we are given the original set of clients $C$ and a set $S \subseteq F$ with $|S| = k$.
	The objective is to find the assignment $\phi:C\to S$ such that (a) the constraints~\cref{eqn:fairclu:majority} and~\cref{eqn:fairclu:minority} 
	are satisfied, and (b) $\pnorm(S;\phi)$ is minimized among all such satisfying assignments.
	\end{definition}
	Given an instance $\cJ$ of the \fairCluAsgn problem, we let $\opta(\cJ)$ denote its optimum value. 
	Clearly, given any instance $\cI$ of the \fairClu problem, if $S^*$ is the optimal subset for $\cI$ and $\cJ$ is the instance of \fairCluAsgn defined by $S^*$,
	then $\optf(\cI) = \opta(\cJ)$. A $\lambda$-violating algorithm for the \fairCluAsgn problem is allowed to incur $\lambda$-additive violation to the fairness constraints.

We present our algorithmic template for the \fairClu
problem in~\Cref{alg:fairclu}. This template uses the \textsc{FairAssignment}
procedure (~\Cref{algo:faircluassgn}) as a subroutine.
\begin{algorithm}[!ht]
    \caption{Algorithm for the \fairClu problem}
    \label{alg:fairclu}
    \begin{algorithmic}[1]
        \Procedure{FairClustering}{$(\cX=F\cup C,d)$, $C=\cup_{i=1}^{\ell}C_i$, $\veca, \vecb \in [0,1]^{\ell}$}
            \State solve the \clu problem on $(\cX,d)$
            \State let $(S, \phi)$ be the solution
            \State $\hat{\phi} = $ \textsc{FairAssignment} $((\cX,d), S,   C=\cup_{i=1}^{\ell}C_i, \veca, \vecb)$ (~\Cref{algo:faircluassgn})
            \State  return $(S,\hat{\phi})$
        \EndProcedure
    \end{algorithmic}
\end{algorithm}

\subsection{Reducing \fairClu to \fairCluAsgn}
\label{sec:reduction}
In this section we present a simple reduction from the
\fairClu problem to the \fairCluAsgn problem 
that uses a \clu solver as a black-box.
%We make it precise in~\Cref{fairclutoasgnthm}.
%Assuming~\Cref{thm:fairAssgn}, this immediately gives~\Cref{thm:fairclu}.

\begin{theorem}\label{fairclutoasgnthm}
Given a $\approximate$-approximate algorithm $\cA$ for the \clu problem and a $\violate$-violating algorithm $\cB$ for the \fairCluAsgn problem, there is a $(\approximate + 2)$-
approximation algorithm for the \fairClu problem with $\lambda$-additive violation.
\end{theorem}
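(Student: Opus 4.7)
The algorithm is the two-step template of Algorithm~\ref{alg:fairclu}: run $\cA$ on the vanilla instance $\cI$ to obtain $(S,\phi)$ with $\pnorm(S;\phi)\le \approximate\cdot\optv(\cI)$, then run $\cB$ on the \fairCluAsgn instance $\cJ$ defined by the center set $S$ to obtain $\hat\phi$. The additive-violation guarantee is immediate from $\cB$, so the only substantive claim is the cost bound $\pnorm(S;\hat\phi)\le(\approximate+2)\cdot\optf(\cI)$. Since $\cB$ achieves $\opta(\cJ)$ (up to violation), it suffices to show $\opta(\cJ)\le(\approximate+2)\cdot\optf(\cI)$, i.e., to exhibit \emph{some} feasible fair assignment from $C$ to $S$ whose cost is at most $(\approximate+2)\cdot\optf(\cI)$.

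The witness assignment will be built from the optimal fair solution $(S^\ast,\phi^\ast)$. Define the remapping $\pi:S^\ast\to S$ by letting $\pi(f^\ast)$ be the closest facility of $S$ to $f^\ast$, and set $\tilde\phi(v):=\pi(\phi^\ast(v))$ for every $v\in C$. Each cluster of $\tilde\phi$ is a disjoint union of clusters of $\phi^\ast$ (those whose centers $\pi$-map to the same point of $S$). The key observation I will invoke is that \emph{both fairness inequalities~\cref{eqn:fairclu:majority} and~\cref{eqn:fairclu:minority} are preserved under cluster union}: if two disjoint sets each have a group-$i$ fraction in $[\beta_i,\alpha_i]$, so does their union. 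Hence $\tilde\phi$ is a feasible (unviolated) fair assignment for the instance $\cJ$.

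For the cost, I bound $d(v,\tilde\phi(v))$ pointwise. By the triangle inequality,
\begin{equation*}
d(v,\tilde\phi(v)) \le d(v,\phi^\ast(v)) + d(\phi^\ast(v),\pi(\phi^\ast(v))) \le d(v,\phi^\ast(v)) + d(\phi^\ast(v),\phi(v)),
\end{equation*}
where the second inequality uses that $\phi(v)\in S$ and $\pi(\phi^\ast(v))$ is the \emph{closest} point of $S$ to $\phi^\ast(v)$. A further application of the triangle inequality gives $d(v,\tilde\phi(v))\le 2\,d(v,\phi^\ast(v))+d(v,\phi(v))$. Raising to the $p$-th power, summing, and applying Minkowski's inequality yields
\begin{equation*}
\pnorm(S;\tilde\phi)\;\le\;2\,\pnorm(S^\ast;\phi^\ast) + \pnorm(S;\phi)\;\le\;2\,\optf(\cI) + \approximate\,\optv(\cI)\;\le\;(\approximate+2)\,\optf(\cI),
\end{equation*}
where the last step uses $\optv(\cI)\le\optf(\cI)$. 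Combining with the guarantee of $\cB$ (which returns a solution of cost $\le \opta(\cJ)\le \pnorm(S;\tilde\phi)$ and at most $\violate$-additive violation) finishes the proof.

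\textbf{Main obstacle.} There are no deep technical difficulties: the $(\approximate+2)$ factor really does come from a single triangle-inequality detour through the vanilla center $\phi(v)$. The one subtle point I want to be careful about is the claim that the remapping $\pi$ preserves fairness; this must be verified for \emph{both} sides of the constraint simultaneously and uses only that $\beta_i$ and $\alpha_i$ bounds are stable under taking disjoint unions. Notably, the argument nowhere uses the overlap parameter $\Delta$, so $\Delta$ enters the final theorem~\ref{thm:fairclu} only through the rounding guarantee of $\cB$.
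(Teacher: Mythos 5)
Your proposal is correct and follows essentially the same route as the paper: the witness assignment $\tilde\phi(v)=\pi(\phi^\ast(v))$ is exactly the paper's $\phi'(v)=\near(\phi^\ast(v))$, the disjoint-union (mediant) argument for preserving~\cref{eqn:fairclu:majority} and~\cref{eqn:fairclu:minority} is the paper's Claim~\ref{clm:fairness}, and the pointwise bound $d(v,\tilde\phi(v))\le 2d(v,\phi^\ast(v))+d(v,\phi(v))$ followed by monotonicity of the $\pnorm$ norm is Claim~\ref{clm:distance}. Your closing observation that $\Delta$ enters only through the assignment-rounding step is also consistent with how the paper organizes the argument.
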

\begin{proof}
 Given instance $\cI$ of the \fairClu problem,
%let $\cI'$ be the corresponding vanilla instance.
we run $\cA$ on $\cI$ to get a (not-necessarily fair) solution $(S, \phi)$.
We are guaranteed $\pnorm(S;\phi) \leq \rho \cdot \optv(\cI) \leq \rho\cdot \optf(\cI)$.
Let $\cJ$ be the instance of \fairCluAsgn obtained by taking $S$ as the set of facilities.
%Construct the corresponding fair assignment instance $\cJ$ for $\cI$ with respect to $S$.
We run algorithm $\cB$ on $\cJ$ to get a $\violate$-violating solution $\hat{\phi}$. We return $(S,\hat{\phi})$.

By definition of $\violate$-violating solutions, we get that $(S,\hat{\phi})$ satisfies~\cref{eq:viol} 
and that $\pnorm(S,\hat{\phi})\leq \opta(\cJ)$.
The proof of the theorem follows from the lemma below.
\end{proof}
%$\{x_{v,f}, \forall v \in C,\forall f\in F\}$.
%Define $\phi'(v) = f$ where $f \in F$ is the unique facility for which $x_{v,f} = 1$.
%Finally, return $(S,\phi')$.

% \begin{figure}
% \centering
% \begin{minipage}{.5\textwidth}
%   \centering
%   \includegraphics[width=2.2 in, height=1 in]{figures/assgn_icon.png}
%   \caption{}
%   %An example of $\phi$, $\phi^*$, and $\phi'$
%   \label{fig:assign}
% \end{minipage}%
% \begin{minipage}{.5\textwidth}
%   \centering
%   \includegraphics[width=2.2 in, height=1 in]{figures/triangle.png}
%   \caption{}
%   \label{fig:triangle}
% \end{minipage}
% %\caption{A figure with two subfigures}
% \label{fig:test}
% \end{figure}

\begin{lemma}\label{lemma:lowcost}
	$\opta(\cJ) \le \left(\rho+2\right)\cdot \optf(\cI)$.
\end{lemma}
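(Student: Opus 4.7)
The plan is to exhibit an explicit feasible fair assignment $\phi'\colon C\to S$ whose $\ell_p$ cost is at most $(\rho+2)\cdot\optf(\cI)$; since $\opta(\cJ)$ is the minimum cost over all feasible fair assignments to the facility set $S$, this bound is exactly what is required. Fix any optimal fair solution $(S^*,\phi^*)$ for $\cI$, so that $\pnorm(S^*;\phi^*)=\optf(\cI)$. For each $f^*\in S^*$, let $\near(f^*)\in\arg\min_{f\in S} d(f^*,f)$ be the facility in $S$ closest to $f^*$, and define $\phi'(v):=\near(\phi^*(v))$ for every client $v\in C$.

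For feasibility, observe that the cluster around any $f\in S$ under $\phi'$ is exactly the (disjoint) union, over those $f^*\in S^*$ with $\near(f^*)=f$, of the $\phi^*$-clusters around $f^*$. Each such $\phi^*$-cluster satisfies, for every group $i\in[\ell]$, the ratio bound $|C_i\cap\text{cluster}|/|\text{cluster}|\in[\beta_i,\alpha_i]$. The standard mediant (weighted-average) inequality --- if $a_j/b_j\in[\beta,\alpha]$ for all $j$ then $\sum_j a_j/\sum_j b_j\in[\beta,\alpha]$ --- transfers both the upper (RD) and lower (MP) bounds to the merged cluster around $f$ in one stroke, so $\phi'$ is feasible for $\cJ$.

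For the cost, I would apply the triangle inequality twice. First,
\[
d(v,\phi'(v)) \;\le\; d(v,\phi^*(v)) + d(\phi^*(v),\near(\phi^*(v))).
\]
Since the vanilla solution $(S,\phi)$ returned by $\cA$ assigns $v$ to a facility $\phi(v)\in S$, the second term satisfies $d(\phi^*(v),\near(\phi^*(v)))\le d(\phi^*(v),\phi(v))\le d(v,\phi^*(v))+d(v,\phi(v))$, so
\[
d(v,\phi'(v)) \;\le\; 2\,d(v,\phi^*(v)) + d(v,\phi(v)).
\]
Taking $\ell_p$ norms over $v\in C$ and applying Minkowski's inequality,
\[
\pnorm(S;\phi') \;\le\; 2\,\pnorm(S^*;\phi^*) + \pnorm(S;\phi) \;\le\; 2\,\optf(\cI) + \rho\cdot\optv(\cI) \;\le\; (\rho+2)\,\optf(\cI),
\]
using $\pnorm(S;\phi)\le\rho\cdot\optv(\cI)$ from $\cA$'s guarantee and $\optv(\cI)\le\optf(\cI)$. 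Since $\phi'$ is feasible for $\cJ$, we conclude $\opta(\cJ)\le\pnorm(S;\phi')\le(\rho+2)\optf(\cI)$.

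There is no serious obstacle in this plan. The only point that requires a little care is the feasibility step, where one must check that taking unions of clusters preserves both the upper and lower fairness ratios simultaneously; the mediant inequality settles both directions at once. The cost analysis is then a textbook combination of two triangle inequalities with Minkowski, and the factor $\rho+2$ emerges naturally from the ``$2$'' in $2\,d(v,\phi^*(v))$ plus the ``$\rho$'' paid by $\cA$.
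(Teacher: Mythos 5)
Your proposal is correct and follows essentially the same argument as the paper: the same reassignment $\phi'(v)=\near(\phi^*(v))$, the same mediant/weighted-average argument for preserving both fairness ratios under cluster merging, and the same double triangle inequality $d(v,\phi'(v))\le 2d(v,\phi^*(v))+d(v,\phi(v))$ followed by Minkowski. No differences worth noting.
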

\begin{proof}
Suppose the optimal solution of $\cI$ is $(S^*,\phi^*)$ with $\pnorm(S^*;\phi^*)=\optf(\cI)$. 
Recall $(S,\phi)$ is the solution returned by the $\rho$-approximate algorithm $\cA$.
We describe the existence of an assignment $\phi':C\to S$ such that $\phi'$ satisfies~\cref{eqn:fairclu:majority} and~\cref{eqn:fairclu:minority}, and
$\pnorm(S;\phi')\leq (\rho+2)\cdot \optf(\cI)$.
Since $\phi'$ is a feasible solution of $\cJ$, the lemma
follows.
For every $f^* \in S^*$, define $\near(f^*) := \arg\min_{f\in S} d(f,f^*)$ be the closest facility in $S$ to $f^*$.
For every client $v\in C$, define $\phi'(v) := \near(\phi^*(v))$
The following two claims prove the lemma. \qedhere
\end{proof}
% \end{adjustbox}
\begin{claim}
\label{clm:fairness}
	$\phi'$ satisfies~\cref{eqn:fairclu:majority} and~\cref{eqn:fairclu:minority}
\end{claim}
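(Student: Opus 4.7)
The plan is to observe that the clusters produced by $\phi'$ are exactly unions of clusters produced by $\phi^*$, and then to note that the fairness constraints (RD) and (MP) are closed under such unions.

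\medskip
\noindent\textbf{Step 1: Express the $\phi'$-clusters as unions of $\phi^*$-clusters.} For each $f \in S$, define
\[
T_f := \{f^* \in S^* : \near(f^*) = f\} \subseteq S^*.
\]
By the very definition $\phi'(v) = \near(\phi^*(v))$, a client $v$ lands at $f$ under $\phi'$ iff $\phi^*(v) \in T_f$. Hence
\[
\{v \in C : \phi'(v) = f\} = \bigsqcup_{f^* \in T_f} \{v \in C : \phi^*(v) = f^*\},
\]
and the same identity holds intersected with any group $C_i$. Note the union is disjoint since $\phi^*$ is a function.

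\medskip
\noindent\textbf{Step 2: Sum the fairness inequalities.} Because $(S^*,\phi^*)$ is feasible for $\cI$, for every $f^* \in S^*$ and every $i \in [\ell]$ we have
\[
\beta_i \cdot |\{v \in C : \phi^*(v) = f^*\}| \;\leq\; |\{v \in C_i : \phi^*(v) = f^*\}| \;\leq\; \alpha_i \cdot |\{v \in C : \phi^*(v) = f^*\}|.
\]
Summing these inequalities over $f^* \in T_f$ and applying the identity from Step~1 yields
\[
\beta_i \cdot |\{v \in C : \phi'(v) = f\}| \;\leq\; |\{v \in C_i : \phi'(v) = f\}| \;\leq\; \alpha_i \cdot |\{v \in C : \phi'(v) = f\}|,
\]
which is exactly \cref{eqn:fairclu:majority} and \cref{eqn:fairclu:minority} for $\phi'$ at $f$.

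\medskip
There is no real obstacle here: the whole point is that fairness of the form ``fraction of group $i$ is between $\beta_i$ and $\alpha_i$'' is preserved under taking disjoint unions of clusters, and the redirection $f^* \mapsto \near(f^*)$ only coarsens the partition of $C$ induced by $\phi^*$. The nontrivial content of Lemma~\ref{lemma:lowcost} lies entirely in the companion claim that bounds $\pnorm(S;\phi')$; fairness itself follows from this one-line averaging argument.
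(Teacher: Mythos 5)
Your proof is correct and follows essentially the same route as the paper: both decompose each $\phi'$-cluster as the disjoint union $\dot\cup_{f^*}\{v:\phi^*(v)=f^*\}$ over the $f^*\in S^*$ with $\near(f^*)=f$, and then observe that the fairness ratios are preserved under such unions. The only cosmetic difference is the final step, where you sum the linear inequalities $\beta_i|C(f^*)|\le|C_i(f^*)|\le\alpha_i|C(f^*)|$ directly, while the paper invokes the equivalent fact that a ratio of sums lies between the minimum and maximum of the individual ratios.
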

\begin{proof}
For any facility $f^*\in S^*$, let $C(f^*) := \{v: \phi^*(v)=f^*\}$. The $C(f^*)$'s partition $C$.
For any $i\in [\ell]$, let $C_i(f^*) := C(f^*)\cap C_i$. Since $(S^*;\phi^*)$ is a feasible solution satisfying the fairness constraints,
we get that for every $f^*\in S^*$ and for every $i\in[\ell]$, $\beta_i\le \frac{|C_i(f^*)|}{|C(f^*)|}\le \alpha_i$.

For any facility $f\in S$, let $N(f) := \{f^*\in S^*: \near(f^*) =f\}$ be all the facilities in $S^*$ for which $f$ is the nearest facility.
Note that the clients $\{v\in C: \phi'(v)=f\}$ are precisely $\dot{\cup}_{f^*\in N(f)} C(f^*)$. Similarly, for any $i\in [\ell]$, 
we have $\{v\in C_i: \phi'(v)=f\}$ is precisely $\dot{\cup}_{f^*\in N(f)} C_i(f^*)$. Therefore, 
$
\frac{|\{v\in C_i: \phi'(v)=f\}|}{|\{v\in C: \phi'(v)=f\}|} = \frac{\sum_{f^*\in N(f)} |C_i(f^*)|}{\sum_{f^*\in N(f)} |C(f^*)|} \in [\beta_i,\alpha_i]
$
since the second summation is between $\min_{f^*\in N(f)} |C_i(f^*)|/|C(f^*)|$ and $\max_{f^*\in N(f)} |C_i(f^*)|/|C(f^*)|$, and both these are in $[\beta_i,\alpha_i]$.
\end{proof}
% \begin{proof}
% For any facility $f^*\in S^*$, let $C(f^*) := \{v: \phi^*(v)=f^*\}$. The $C(f^*)$'s partition $C$.
% For any $i\in [\ell]$, let $C_i(f^*) := C(f^*)\cap C_i$. Since $(S^*;\phi^*)$ is a feasible solution satisfying the fairness constraints,
% we get that for every $f^*\in S^*$ and for every $i\in[\ell]$, $\beta_i\le \frac{|C_i(f^*)|}{|C(f^*)|}\le \alpha_i$.

% For any facility $f\in S$, let $N(f) := \{f^*\in S^*: \near(f^*) =f\}$ be all the facilities in $S^*$ for which $f$ is the nearest facility.
% Note that the clients $\{v\in C: \phi'(v)=f\}$ are precisely $\dot{\cup}_{f^*\in N(f)} C(f^*)$. Similarly, for any $i\in [\ell]$, 
% we have $\{v\in C_i: \phi'(v)=f\}$ is precisely $\dot{\cup}_{f^*\in N(f)} C_i(f^*)$. Therefore, 
% $
% \frac{|\{v\in C_i: \phi'(v)=f\}|}{|\{v\in C: \phi'(v)=f\}|} = \frac{\sum_{f^*\in N(f)} |C_i(f^*)|}{\sum_{f^*\in N(f)} |C(f^*)|} \in [\beta_i,\alpha_i]
% $
% since the second summation is between $\min_{f^*\in N(f)} |C_i(f^*)|/|C(f^*)|$ and $\max_{f^*\in N(f)} |C_i(f^*)|/|C(f^*)|$, and both these are in $[\beta_i,\alpha_i]$.
% \end{proof}
\begin{claim}
\label{clm:distance}
	$\pnorm(S;\phi')\leq (\rho+2)\optf(\cI)$.
\end{claim}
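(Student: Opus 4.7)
The plan is to bound $d(v,\phi'(v))$ pointwise for every client $v$, and then aggregate via Minkowski's inequality for the $\ell_p$-norm to get the desired bound.

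First I would use the triangle inequality to write
\[
d(v,\phi'(v)) \;=\; d(v,\near(\phi^*(v))) \;\le\; d(v,\phi^*(v)) + d(\phi^*(v),\near(\phi^*(v))).
\]
The second term is the distance from the optimal-fair center of $v$ to its nearest facility in $S$. The key trick is to upper-bound this using $\phi(v) \in S$ as a witness: by definition of $\near$,
\[
d(\phi^*(v),\near(\phi^*(v))) \;\le\; d(\phi^*(v),\phi(v)) \;\le\; d(\phi^*(v),v) + d(v,\phi(v)),
\]
using the triangle inequality again. Combining these gives the pointwise bound
\[
d(v,\phi'(v)) \;\le\; 2\,d(v,\phi^*(v)) + d(v,\phi(v)).
\]

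Next I would aggregate. Taking the $\ell_p$-norm on both sides (viewed as vectors indexed by $v \in C$) and applying Minkowski's inequality yields
\[
\pnorm(S;\phi') \;\le\; 2\bigl(\textstyle\sum_v d(v,\phi^*(v))^p\bigr)^{1/p} + \bigl(\textstyle\sum_v d(v,\phi(v))^p\bigr)^{1/p} \;=\; 2\,\pnorm(S^*;\phi^*) + \pnorm(S;\phi).
\]
Finally I would plug in the two known estimates: $\pnorm(S^*;\phi^*) = \optf(\cI)$ by optimality, and $\pnorm(S;\phi) \le \rho\cdot \optv(\cI) \le \rho\cdot \optf(\cI)$ from the approximation guarantee of $\cA$ together with $\optv(\cI) \le \optf(\cI)$ noted in Definition~2. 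Adding these up gives exactly $(\rho+2)\optf(\cI)$.

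There is no real obstacle here; the only point that requires care is using $\phi(v)$ (the assignment produced by $\cA$) as the witness when bounding $d(\phi^*(v),\near(\phi^*(v)))$, since it is what lets us invoke the approximation guarantee of $\cA$ rather than some bound on the Hausdorff distance between $S$ and $S^*$. The use of Minkowski's inequality (rather than, say, a crude $(a+b)^p \le 2^{p-1}(a^p+b^p)$ argument) is what keeps the constant tight at $\rho+2$ for every $\ell_p$ norm simultaneously, including $p=\infty$.
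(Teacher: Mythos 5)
Your proof is correct and follows essentially the same route as the paper: the identical pointwise bound $d(v,\phi'(v)) \le 2d(v,\phi^*(v)) + d(v,\phi(v))$ obtained by two applications of the triangle inequality with $\phi(v)$ as the witness in the definition of $\near$, followed by aggregation via the triangle inequality for the $\ell_p$-norm and the estimates $\pnorm(S^*;\phi^*)=\optf(\cI)$ and $\pnorm(S;\phi)\le\rho\cdot\optf(\cI)$. No gaps.
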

\begin{proof}
%The goal is to show that
%If such solution exists (i.e. $\cI$ is feasible), we can construct 
%an assignment function $\phi'$ such that $(S,\phi')$ is 
%a fair solution for $\cI$ with cost at most $(\approximate + 2)\optf(\cI)$.
%This $\phi'$ can be easily translated to a feasible solution for $\cJ$
%which proves the lemma.
%
%The first part is to bound the cost of $(S,\phi')$. For any $f^* \in S^*$, define $\near(f^*)$ to be its closest facility in $S$.
%Now for any client $v \in C$, let $\phi'(v) := \near(\phi^*(v))$. 
Fix a client
$v \in C$. For the sake of brevity, let: $f = \phi(v)$, $f' = \phi'(v)$, and $f^* = \phi^*(v)$. 
We have
\ifthenelse{\equal{\lncs}{1}}
{
%\begin{equation*}
    $d(v,f') = d(v,\near(f^*)) \leq d(v,f^*) + d(f^*,\near(f^*)) \leq d(v,f^*) + d(f^*,f) \leq 2d(v,f^*) + d(v,f)$.
%\end{equation*}
}
{
\[
d(v,f') = d(v,\near(f^*)) \leq d(v,f^*) + d(f^*,\near(f^*)) \leq d(v,f^*) + d(f^*,f) \leq 2d(v,f^*) + d(v,f)
\]
}
The first and third follows from triangle inequality while the second follows from the definition of $\near$. 
%(see~\Cref{fig:triangle}).
%By definition of $\near(.)$, $d(f^*,\near(f^*))$ is at most its distance to any facility in $S$,
%and in particular, $f$. So we have:
%\begin{equation*}
%    d(v,f') \leq d(v,f^*) + d(f^*,f)\,.
%\end{equation*}
%Using the triangle inequality for the second time gives:
%\begin{align*}
%    d(v,f') \leq  d(v,f^*) + d(f^*,v) +  d(v,f) = 2d(f^*,v) +  d(v,f)\,.
%\end{align*}
Therefore, if we define the assignment cost vectors 
corresponding to $\phi$, $\phi'$, and $\phi^*$
as $\zf = \{d(v,\phi):v \in C\}$, $\zs = \{d(v,\phi'):v \in C\}$, and $\zv = \{d(v,\phi^*):v \in C\}$ respectively,
the above equation implies
%\begin{equation}\label{eqn:slnbound}
    $\zs \leq 2\zf + \zv$.
%\end{equation}
Now note that the $\pnorm$ is a monotone norm on these vectors, and therefore,
\[
\pnorm(S;\phi') = \pnorm(\zs) \leq 2\pnorm(\zf) + \pnorm(\zv) = 2\pnorm(S^*;\phi^*) + \pnorm(S;\phi)
\]
The proof is complete by noting $\pnorm(S^*;\phi^*) = \optf(\cI)$ and $\pnorm(S;\phi)\leq \rho\cdot\optf(\cI)$.
%
%Let $\pnorm$ be the $p$th norm function on $\R^{|C|}$. That is:
%\begin{equation*}
%    \pnorm({z_v: v \in C}) = \Big(\sum\limits_{v \in C} z_v^p\Big)^{\frac{1}{p}}
%\end{equation*}
%Since $\pnorm$ is monotone, we can apply it on \cref{eqn:slnbound} and get:
%\begin{equation*}
%    \pnorm(\zs) \leq \pnorm(2\zf + \zv) \,.
%\end{equation*}
%Since $\pnorm$ is a norm:
%\begin{equation*}
%    \pnorm(\zs) \leq 2\pnorm(\zf) + \pnorm(\zv) \,.
%\end{equation*}
%Observe that $\pnorm(\zf) = \optf$ and $\pnorm(\zv)$ is the cost of $(S,\phi)$
%for $\cI'$ which is promised to be at most $\approximate\optv(\cI') \leq \approximate\optf(\cI)$.
%Putting it all together and raising to power $p$:
%\begin{equation*}
%    \opta(\cJ) \leq \sum\limits_{v \in C} d(v,\phi'(v))^p \leq (\approximate + 2)^p\optf(\cI)^p \,.
%\end{equation*}
\end{proof}
%The second part of the proof is arguing that $(S,\phi')$ is fair. For a fixed $f \in S$ and $i \in [l]$ we have:
%\begin{equation*}
%    |\{v \in C_i : \phi'(v) = f\}| = |\{v \in C_i : \near(\phi^*(v)) = f\}| 
%    = \sum\limits_{\substack{f^* \in S:\\ f = \near(f^*)}} |\{v \in C_i: \phi^*(v) = f^*\}| \,.
%\end{equation*}
%Since $\phi^*$ preserves restricted dominance:
%\begin{equation*}
%|\{v \in C_i : \phi'(v) = f\}| \leq \sum\limits_{\substack{f^* \in S:\\ f = \near(f^*)}} \alpha_i |\{v \in C: \phi^*(v) = f^*\}| = \alpha_i |\{v \in C : \phi'(v) = f\}| \,.
%\end{equation*}
%The minority protection constraint is proved similarly.
%The above two claims prove the lemma.\end{proof}

%This part is different for the case of $p=\infty$. Because in that case, the objective function is
%not linear anymore. To overcome this problem, a common trick is to remove the objective function and maintain a guess %$r$ of the cost of the optimal solution (can be found via binary-search). Then define: $N(v) = \{f \in S: d(v,f) \leq %r\}$ for all $v \in C$.

%We end this section with mentioning that the proof of \cref{lemma:lowcost} can be generalized
%to all the objective functions that satisfy the triangle inequality. The reason why the statement
%of \cref{fairclutoasgnthm} is limited to $p$-norms is that, we heavily rely on the fact that the
%\fairCluAsgn problem's objective function is linear. In other words, the results we prove in the
%next section for \fairCluAsgn only work for the case of linear objective functions.

\subsection{Algorithm for the \fairCluAsgn problem}
\label{sec:assignment}

\def\LP{\mathsf{LP}}
\def\sx{x^{\star}}
\def\LPP{\mathsf{LP2}}

To complete the proof of \Cref{thm:fairclu}, we need to give an algorithm for the \fairCluAsgn problem. We present this in~\Cref{algo:faircluassgn}. The following theorem then establishes our main result.
\begin{restatable}{theorem}{ThmfairAssgn}
	\label{thm:fairAssgn}
	There exists a $(4\Delta+3)$-violating algorithm for the
	\fairCluAsgn problem.
\end{restatable}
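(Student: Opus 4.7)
My plan is to prove the theorem via LP relaxation followed by iterative rounding. The natural relaxation introduces a variable $x_{v,f}\in[0,1]$ for each client--facility pair, with the assignment constraints $\sum_{f\in S} x_{v,f}=1$ for every $v\in C$ and the (now linear) fairness constraints $\beta_i \sum_{v\in C} x_{v,f}\le \sum_{v\in C_i} x_{v,f}\le \alpha_i \sum_{v\in C} x_{v,f}$ for every $f\in S$ and $i\in[\ell]$. For $p<\infty$, I would minimize $\sum_{v,f} x_{v,f}\, d(v,f)^p$, whose optimum lower-bounds $\opta(\cJ)^p$; for $p=\infty$, I would binary search on a threshold $T$ and force $x_{v,f}=0$ whenever $d(v,f)>T$, taking the smallest feasible $T$. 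In either case, any integer feasible $\hat{\phi}$ whose LP objective is at most the LP optimum satisfies $\pnorm(S;\hat{\phi})\le \opta(\cJ)$.

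Given a basic optimal $x^*$, I would round to an integer assignment via an iterative procedure. All $0/1$ variables are fixed, and the remaining fractional variables form a bipartite support graph $H$ on $C\cup S$. First, I would repeatedly eliminate cycles of $H$: each cycle is rotated by shifting alternating edges by $\pm\epsilon$, with $\epsilon$ and the sign pattern chosen so that at least one variable becomes integer and $\sum_{v,f} x_{v,f}\,d(v,f)^p$ does not increase. The key feature of cycle rotations on a bipartite graph is that each facility on the cycle receives exactly one $+\epsilon$ and one $-\epsilon$ edge, so $\sum_v x_{v,f}$ is preserved at every facility and only the per-group counts $\sum_{v\in C_i} x_{v,f}$ drift. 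Once $H$ is acyclic, I would process each tree from its leaves inward: client-leaves get assigned to their unique fractional facility, and facility-leaves are resolved by directly rounding their incident edges. A standard LP rank argument (tight constraints vs.\ fractional variables) should show that any facility touched in this phase has $O(1)$ fractional neighbors, so tree processing reassigns only a constant number of clients at each facility.

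The violation analysis then proceeds by tracking two quantities at each facility $f$: the cluster size $\sum_v \hat{x}_{v,f}$ and the group-counts $\sum_{v\in C_i}\hat{x}_{v,f}$. I would argue that over the entire procedure, at most $4$ of the clients assigned to $f$ are reassigned, and since each such client lies in at most $\Delta$ groups, every group-count can differ from its LP value by at most $4\Delta$. The cluster size itself can shift by at most a small constant during leaf rounding (I would budget $+3$ to cover the rounding of $\sum_v x^*_{v,f}$ against the two separately appearing sides of the fairness constraint). Combining yields additive violation $4\Delta+3$ on both \cref{eqn:fairclu:majority} and \cref{eqn:fairclu:minority}, while the cost bound follows because no rounding step increases the LP objective.

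The main technical obstacle I anticipate is simultaneously controlling \emph{both} sides (RD and MP) of each fairness constraint: a cycle rotation that improves one direction typically worsens the other, and with overlapping groups a single reassignment can perturb $\Delta$ different group-counts at once. Making the constant in $4\Delta+3$ come out correctly will require a careful invariant---something like ``at every stage, each facility in the current fractional support has at most a constant number of fractional neighbors''---which I expect to maintain by choosing the cycle-elimination and leaf-rounding order to attack high-degree facilities first. For the special case $\Delta=1$ (disjoint groups) alluded to earlier in the paper, the constraints decouple across groups enough to allow a one-shot transportation-polytope rounding and hence the simpler bound mentioned in the introduction.
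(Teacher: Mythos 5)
Your setup (the LP relaxation \cref{eqn:violateLP}, the $p=\infty$ thresholding, and the observation that any integral solution of no larger LP cost has $\pnorm(S;\hat{\phi})\le\opta(\cJ)$) matches the paper. The rounding, however, has a genuine gap. Your cycle-cancellation phase preserves the objective and the cluster sizes $\sum_v x_{v,f}$, but, as you yourself note, it lets the group counts $\sum_{v\in C_i}x_{v,f}$ drift --- and you never bound the \emph{cumulative} drift over the (possibly many) rotations needed to make the support acyclic. Nothing in the argument prevents a single facility's group count from moving far from its LP value before the support becomes a forest, so the final violation cannot be charged to a constant number of reassigned clients. Likewise, the invariant you lean on ("any facility touched in the tree phase has $O(1)$ fractional neighbors") is not a consequence of a one-shot rank argument: a basic solution of \cref{eqn:violateLP} has up to $|C|+2k\ell$ tight constraints, and the fractional edges can concentrate at a single facility (up to $\Omega(\ell)$ of them), so no ordering of cycle eliminations and leaf roundings establishes it. The budget "$4$ reassigned clients, each in $\le\Delta$ groups, plus $3$ for the size" is therefore an assertion, not a proof.

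The missing idea is \emph{iterative constraint relaxation}: re-solve the LP after fixing integral variables, and \emph{drop} a group constraint at a facility once the number of fractional variables in its support falls below $2(\Delta+1)$; a counting argument (tight constraints vs.\ fractional variables, where each edge $(v,f)$ appears in at most $\Delta+1$ of the degree-type constraints) guarantees that at every basic optimum either some variable is integral or some constraint is droppable, and dropping is exactly what caps the additive damage. The paper packages this as a reduction to the minimum-cost degree-bounded matroid basis theorem of Kir\'aly et al.\ (\Cref{thm:matroid}) over a partition matroid with hyperedges $E_f$ and $E_{f,i}$, giving violation $\pm(2\Delta_H-1)$ with $\Delta_H\le\Delta+1$; the $4\Delta+3$ then comes from combining this with the floor/ceiling slack on $T_f,T_{f,i}$ and $\alpha_i\le 1$. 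Your approach does go through for $\Delta=1$, where the polytope decouples into a transportation-type structure and Shmoys--Tardos rounding applies (this is exactly the paper's \Cref{rem:delta1}), but for overlapping groups the cycle-cancel-then-prune-leaves strategy does not yield the stated bound.
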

\begin{proof}
%In this section we prove~\Cref{thm:fairAssgn} and give a $(4\Delta+3)$-violating algorithm for the \fairCluAsgn problem.
Fix an instance $\cJ$ of the problem. 
We start by writing a natural LP-relaxation\footnote{This makes sense only for finite $p$. See~\Cref{rem:kcen}}.
% \begin{subequations}
% \label{eqn:violateLP} 
% \begin{alignat}{4}
% \LP := \min & \sum\limits_{v \in C, f\in S}d(v,f)^px_{v,f} && \qquad x_{v,f} \in [0,1],~~ \forall v\in C, f\in S \label{eq:lp} \tag{LP}\\ 
%      & \sum_{v\in C_i} x_{v,f}  ~~\leq~~ \alpha_i \sum\limits_{v\in C} x_{v,f} &&\qquad \forall f \in S, \forall i \in [\ell]   \label{eqn:violate:P1}  \\
%     &\sum_{v\in C_i} x_{v,f} ~~\geq~~ \beta_i \sum\limits_{v\in C} x_{v,f}   && \qquad\forall f \in S, \forall i \in [\ell]   \label{eqn:violate:P2}  \\
%   & \sum_{f \in S} x_{v,f}
%     ~~=~~1 && \qquad\forall v \in C \label{eqn:violate:P3} 
% \end{alignat}
% \end{subequations}
\begin{subequations}
\label{eqn:violateLP} 
\begin{alignat}{4}
\LP := \min  &\sum\limits_{v \in C, f\in S}d(v,f)^p  x_{v,f} && \qquad x_{v,f} \in [0,1],~~ \forall v\in C, f\in S \label{eq:lp} \tag{LP}\\ 
      \beta_i \sum\limits_{v\in C} x_{v,f} ~~\leq~~ & \sum_{v\in C_i} x_{v,f}  ~~\leq~~ \alpha_i \sum\limits_{v\in C} x_{v,f} &&\qquad \forall f \in S, \forall i \in [\ell]   \label{eqn:violate:P1}   \\
   & \sum_{f \in S} x_{v,f}
    ~~=~~1 && \qquad\forall v \in C \label{eqn:violate:P3} 
\end{alignat}
\end{subequations}
\begin{claim}
	$\LP \leq \opta(\cJ)^p$. 
\end{claim}
\begin{proof}
Given an optimal solution $\phi^*$ of $\cJ$, set $x_{v,f} = 1$ iff $\phi^*(v) = f$. This trivially satisfies the fairness conditions.
Observe $\pnorm(S;\phi^*)^p$ is precisely the objective cost.
\end{proof}
Let $\sx$ be an optimum solution to the above LP. 
Note that $\sx$ could have many coordinates {\em fractional}. In~\Cref{algo:faircluassgn}, we {\em iteratively round} $\sx$ to an {\em integral} solution with the same or better value, but which violates the fairness constraints by at most $4\Delta+3$. Our algorithm effectively
simulates an algorithm for {\em minimum degree-bounded matroid basis} problem ($\mbdmb$ henceforth) due to Kir\'aly et al.~\cite{kiraly2012degree}.
In this problem one is given a matroid $M = (X,\cI)$, costs on elements in $X$, 
a hypergraph $H = (X,\cE)$, and functions $f:\cE\to \R$ and $g:\cE\to \R$ such that
$f(e) \le g(e)$ for all $e\in \cE$. The objective is to find the minimum cost basis $B\subseteq X$ such that for all $e\in \cE$, $f(e) \leq |B\cap e| \le g(e)$.  We state the main result in Kir\'aly et al~\cite{kiraly2012degree} below.
\begin{theorem}[Paraphrasing of Theorem 1 in ~\cite{kiraly2012degree}]
	\label{thm:matroid}
	There exists a polynomial time algorithm 
	that outputs a basis $B$ of cost at most $\opt$, such that
	$f(e) -2\Delta_H +1\leq |B \cap e|\leq g(e) +2\Delta_H -1 $ for each
	edge $e \in \cE$ of the hypergraph, where $\Delta_H = max_{v\in X} 
	|\{e \in E_H: v\in e\}|$ is the maximum degree of a vertex in
	the hypergraph $H$, and $\opt$ is the cost of the natural LP relaxation.
\end{theorem}

To complete the proof of the main theorem, we first construct an instance of the \mbdmb
problem using $\sx$. Then we appeal to~\Cref{thm:matroid} to argue about
the quality of our algorithm.

Let $E$ be the set of $(v,f)$ pairs with $\sx_{v,f} > 0$.
For a point $v\in C$, let $E_v$ denote the set of edges in $E$ incident on $v$. 
Define $\cF :=\{F\subseteq E : |F \cap E_v| \leq 1~~\forall v \in C\}$ to be collection of edges which ``hit'' every client at most once. The pair $M = (E,\cF)$ is a well known combinatorial object called a (partition) {\em matroid}.
For each element $(v,f)$ of this matroid $M$, we denotes its cost to be $c(v,f) := d(v,f)^p$.

Next we define a {\em hypergraph} $H = (E, \cE)$. 
%whose vertex set are the pairs in $E$. 
For each $f\in S$ and $i\in [\ell]$, let  $E_{f,i} \subseteq E$ consisting of pairs $(v,f)\in E$ for $v\in C_i$.
Let $E_f := \cup_{i=1}^{\ell} E_{f,i}$. Each of these $E_{f,i}$'s and $E_f$'s are added to the collection of hyperedges $\cE$.
Next, let $T_f := \sum_{v\in C} x^{\star}_{v,f}$ be the total fractional assignment on $f$. Similarly, for all $i\in [\ell]$, define
$T_{f,i} :=  \sum_{v\in C_i} x^{\star}_{v,f}$. 
Note that, both $T_f$ and $T_{f,i}$ can be fractional. For every $e \in E_{f,i}$, we define $f(e) := \floor{T_{f,i}}$ and $g(e) = \ceil{T_{f,i}}$.
For each $e\in E_f$, we denote $f(e) = \floor{T_f}$ and $g(e) = \ceil{T_f}$. This completes the 
construction of the $\mbdmb$ instance.

Now we can apply~\Cref{thm:matroid} to obtain a basis $B$ of matroid $M$ with the properties mentioned. Note that for our hypergraph $\Delta_H \leq \Delta+1$ where $\Delta$ is the maximum number of groups a client can be in.
This is because every pair $(v,f)$ belongs to $E_{f}$ and $E_{f,i}$'s for all $C_i$'s containing $v$. 
Also note that any basis corresponds to an assignment $\phi:C\to S$ of all clients. Furthermore, the cost of the basis is precisely
$\pnorm(S;\phi)^p$. Since this cost is $\leq \LP \leq \optf(\cJ)^p$, we get that $\pnorm(S;\phi)\leq \optf(\cJ)$.
We now need to argue about the violation.

Fix a server
$f$ and a client group $C_i$. Let  $\overline{T}_f$
and  $\overline{T}_{f,i}$ denote the number of clients 
assigned to $f$ and the number of clients from $C_i$ that
are assigned to $f$ respectively (by the integral assignment).
Then, by~\Cref{thm:matroid}, $\floor{T_f} - 2\Delta - 1 \leq \overline{T}_f \leq \ceil{T_f} + 2\Delta + 1$
and $\floor{T_{f,i}} - 2\Delta - 1 \leq \overline{T}_{f,i} \leq \ceil{T_{f,i}} + 2\Delta + 1$ (using $\Delta_H \leq \Delta+1$).
Now consider~\cref{eqn:fairclu:majority}. % restricted dominance constraint.
Since, $T_{f,i} \leq \alpha_i T_{f}$ (as the LP solution is feasible), 
\[ 
\overline{T}_{f,i} \leq \ceil{\alpha_i T_f} + 2\Delta +1
\leq \alpha_i \floor{T_{f}} + 2\Delta + 2 
\leq \alpha_i (\overline{T}_f + 2\Delta + 1) + 2\Delta +2
\leq \alpha_i \overline{T}_f + (4\Delta + 3) \,,
\]
where the second and last inequality follows as $\alpha_i \leq 1$.
We can similarly argue about~\cref{eqn:fairclu:minority}.
This completes the proof of~\Cref{thm:fairAssgn}.\qedhere
\end{proof}
However, rather than constructing an
\mbdmb instance explicitly, we write a natural LP-relaxation more suitable
to the task --- this is given in~\cref{eqn:iterateLP}. For the sake
of completeness, we give the details of our algorithm in~\cref{algo:faircluassgn}.
\begin{algorithm}[!ht]
    \caption{Algorithm for the \fairCluAsgn problem}
    \label{algo:faircluassgn}
    \begin{algorithmic}[1]
        % \Procedure{FairClustering}{$(\cX=F\cup C,d)$, $C=\cup_{i=1}^{\ell}C_i$, $\veca, \vecb \in [0,1]^{\ell}$}
        %     \State solve the \clu problem on $(\cX,d)$
        %     \State let $(S, \phi)$ be the solution
        %     \State $\hat{\phi} = $ \textsc{FairAssignment} $((\cX,d), S,   C=\cup_{i=1}^{\ell}C_i, \veca, \vecb)$
        %     \State  return $(S,\hat{\phi})$
        % \EndProcedure
        \Procedure{FairAssignment}{$(\cX,d)$, $S$,  $C=\cup_{i=1}^{\ell}C_i$, $\veca, \vecb \in [0,1]^{\ell}$}
            \State $\hat{\phi}(v) = \emptyset$ for all $v\in C$
            \State solve the $\LP$ given in~\cref{eqn:violateLP}, let $\sx$
                    be an optimal solution
            \State for each $x^{\star}_{v,f}=1$, set $\hat{\phi}(v)=f$ and remove $v$ from $C$ (and relevant $C_i$s).
            \State let $T_f := \sum_{v\in C} x^{\star}_{v,f}$ for all $f\in S$
            \State let $T_{f,i} :=  \sum_{v\in  C_i} x^{\star}_{v,f}$ for all $i\in [\ell]$ and $f\in S$
            \State construct $\LPP$ as given in~\cref{eqn:iterateLP},
                   only with variables $x_{v,f}$ such that $x^{\star}_{v,f}>0$
            \While {there exists a $v\in C$ such that $\hat{\phi}(v)=\emptyset$}
                \State solve $\LPP$, let $x^{\star}$ be an optimal solution
                \State for each $x^{\star}_{v,f}=0$, delete the variable $x^{\star}_{v,f}$ from $\LPP$
                \State for each $x^{\star}_{v,f}=1$, set $\hat{\phi}(v)=f$ and remove $v$ from $C$ (and relevant $C_i$s). Reduce $T_f$ and relevant $T_{f,i}$'s by 1.
                \State for every $i\in [\ell]$ and $f\in S$, if $|x^{\star}_{v,f} : 0<x^{\star}_{v,f} <1,~v\in C_i| \leq 2(\Delta+1)$ remove the respective constraint in~\cref{eqn:violate:iterate:P1}
                \State for every $f\in S$, if $|x^{\star}_{v,f} : 0<x^{\star}_{v,f} <1,~v \in C| \leq 2(\Delta+1)$ remove the respective constraint in~\cref{eqn:violate:iterate:P2}
            \EndWhile
        \EndProcedure
    \end{algorithmic}
\end{algorithm}
\begin{subequations}
\label{eqn:iterateLP} 
\begin{alignat}{4}
\LPP := \min & \sum\limits_{v \in C, f\in S}d(v,f)^px_{v,f} && \qquad x_{v,f} \in [0,1],~~ \forall v\in C, f\in S \label{eq:iterate:lp} \\ 
     & \floor{T_{f}} ~~\leq ~~ \sum_{v\in C} x_{v,f}  ~~\leq~~ \ceil{T_{f}} &&\qquad \forall f \in S, \forall i \in [\ell]   \label{eqn:violate:iterate:P2}  \\
     & \floor{T_{f,i}} ~~\leq ~~ \sum_{v\in C_i} x_{v,f}  ~~\leq~~ \ceil{T_{f,i}} &&\qquad \forall f \in S, \forall i \in [\ell]   \label{eqn:violate:iterate:P1}  \\
   & \sum_{f \in S} x_{v,f}
    ~~=~~1 && \qquad\forall v \in C \label{eqn:violate:iterate:P3} 
\end{alignat}
\end{subequations}
\begin{remark}\label{rem:kcen}
	For the case of $p = \infty$, the objective function of~\cref{eq:lp} doesn't make sense. Instead, one proceeds as follows.
	We begin with a guess $G$ of $\opta(\cJ)$; we set $x_{v,f} = 0$ for all pairs with $d(v,f) > G$. We then check if~\cref{eqn:violate:P1,eqn:violate:P3} 
	have a feasible solution. If they do not, then our guess $G$ is infeasible (too small). If they do, then the proof given above returns an assignment which violates~\cref{eqn:fairclu:majority,eqn:fairclu:minority} by additive $4\Delta+3$, and satisfies $d(v,\phi(v))\leq G$ for all $v\in C$. 
\end{remark}

\begin{remark}\label{rem:delta1}
	When $\Delta = 1$, that is, the $C_i$'s are disjoint, we can get an improved $+3$ additive violation (instead of $+7$). Instead of using~\Cref{thm:matroid}, 
	we  use the generalized assignment problem (GAP) rounding technique by Shmoys and Tardos~\cite{ShmoysT93} to achieve this. 
	%We omit details from this extended abstract.
\end{remark}
\begin{remark}
Is having a bicriteria approximation necessary? We do not know. The nub is the \fairCluAsgn problem.
It is not hard to show that deciding whether a $\lambda$-violating solution exists with $\lambda = 0$ under the given definition is NP-hard.
\footnote{A simple reduction from the 3D-matching problem.}
However, an algorithm with $\lambda=0$ and cost within a constant factor of $\opta(\cJ)$ is not ruled out. This is an interesting open question.
\end{remark}
%{\bf deepc: TODO: the center}
%\end{proof}

\section{Lower-bounded clustering}
\label{fn}
In this section we show a simple application of our technique which solves the lower bounded clustering problem.
The problem arises when, for example, one wants to ensure anonymity~\cite{AggarwalFKKPTZ06} and is called ``private clustering'' in~\cite{RS18}.
For the $p=\infty$ norm, that is the {\em lower bounded $k$-center} problem, there is a $3$-approximation known~\cite{AggarwalFKKPTZ06} for the problem.
For the $p=1$ norm, that is the {\em lower bounded $k$-median problem}, there are $O(1)$-approximation algorithms\footnote{Actually, the papers of~\cite{Svitkina10,AhmadianS12} consider the facility location version without any constraints on the number of facilities. A later paper by Ahmadian and Swamy~\cite{AhmadianS16} mentions that these algorithms imply $O(1)$-approximations for the $k$-median version. The constant is not specified.\label{fn:lb}}~\cite{Svitkina10,AhmadianS12} although the constants are large.
In contrast, we show simple algorithms with much better constants in $O(2^k \poly(n))$ time.
%In this section, we discuss simple extensions to our techniques that gives results for both ``private clustering'' and a more general variant called ``strong privacy'' in~\cite{RS18}.

\begin{definition}[\lbclu]
\label{def:lbclu}
The input is a \clu instance, and an integer $L \in [|C|]$.
The objective is to open a set of facilities $S \in \cF$ with $|S| \leq k$, 
and find an assignment function 
$\phi: C \rightarrow{ S}$ of clients to the opened facilities
so that (a) $\pnorm(S;\phi)$ is minimized, and (b) for every $f\in S$, we have $|\{v\in C: \phi(v)=f\}| \geq L$.
%such that the \clu's objective $(\sum_{v\in C} d(v,\phi(v))^p)^{1/p}$ is minimized
%and the following lower-bound constraints are satisfied:
%\begin{equation}\label{eqn:lbmed}
%    \lvert \{v \in C : \phi(v)=f \} \rvert 
%            \geq L \ \ \ 
%             \forall f\in S \,.
%\end{equation}
\end{definition}
\ThmLB*
%\begin{theorem}\label{thm:lbfpt}
%Given a $\approximate$-approximate algorithm $\cA$ for the \clu problem that runs in time $T$, there is a $(\approximate + 2)$-approximation algorithm for the \lbclu problem that 
%runs in time $O(T + 2^k\cdot\poly(n))$.
%\end{theorem}
\noindent
Using the best known polynomial time algorithm for the $k$-median problem due to Byrka \etal~\cite{ByrkaPRS14} and best known FPT-algorithm due to Cohen-Addad \etal~\cite{CohenGKLL19}, we get the following corollary.
\begin{theorem}\label{thm:lbmed}
	There is a $4.676$-factor approximation algorithm for the lower bounded $k$-median running in $O(2^k \cdot \poly(n))$ time. 
	There is a $3.736$-factor approximation algorithm for lower bounded $k$-median running in time $k^{O(k)}\poly(n)$ time.
\end{theorem}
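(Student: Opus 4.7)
My approach mirrors the reduction scheme used for \fairClu in Section \ref{sec:reduction}. First I would run $\cA$ on the given \lbclu instance (ignoring the lower bound) to obtain a set $S \subseteq F$ of at most $k$ centers in time $T$. Then, with $S$ fixed, I would solve a \emph{lower-bounded assignment} subproblem on $S$: find $\phi: C \to S$ minimizing $\pnorm(S;\phi)$ subject to every \emph{used} center $f \in S$ receiving at least $L$ clients (centers may be left idle). The algorithm returns $(S,\phi)$.

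\textbf{Solving the assignment subproblem.} The requirement that each $f \in S$ has load in $\{0\} \cup [L,\infty)$ is nonconvex, so I would handle it by brute-force enumeration: for each of the $2^k$ subsets $S' \subseteq S$ with $L\cdot|S'| \leq |C|$, solve the min-cost assignment of all clients to centers in $S'$ with each facility in $S'$ receiving at least $L$ clients. This is a standard min-cost flow problem (source $\to$ clients $\to$ facilities of $S'$ $\to$ sink, with lower bound $L$ on the facility-to-sink arcs) solvable in polynomial time, and the enumeration adds a $2^k$ factor, yielding the claimed $O(T + 2^k\cdot\poly(n))$ runtime.

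\textbf{Approximation analysis.} Let $(S^*,\phi^*)$ denote an optimal lower-bounded solution with $\pnorm(S^*;\phi^*) = \optl(\cI)$. For each $f^* \in S^*$ define $\near(f^*) := \arg\min_{f \in S} d(f,f^*)$, and for each client $v$ set $\phi'(v) := \near(\phi^*(v))$; let $S'' := \{\near(f^*) : f^* \in S^*\}$. The key observation is that the cluster induced by $\phi'$ at any $f \in S''$ equals $\bigcup_{f^* : \near(f^*)=f} \{v : \phi^*(v)=f^*\}$, a \emph{union} of optimal clusters; since each optimal cluster contains at least $L$ clients, so does the union, so $\phi'$ is a feasible lower-bounded assignment onto $S''$. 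The triangle-inequality computation of Claim \ref{clm:distance} then carries over verbatim to yield
\[
\pnorm(S''; \phi') \leq 2\,\pnorm(S^*; \phi^*) + \pnorm(S; \phi_\cA) \leq 2\,\optl(\cI) + \rho \cdot \optv(\cI) \leq (\rho+2)\,\optl(\cI),
\]
where $\phi_\cA$ is the assignment returned by $\cA$ and we used $\optv(\cI) \leq \optl(\cI)$. Since our enumeration considers $S''$ among its subsets and returns a minimum-cost feasible assignment on each, the final output costs at most $(\rho+2)\,\optl(\cI)$.

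\textbf{Main obstacle.} The only potential subtlety is verifying that the snapping step preserves the lower-bound constraint; this is painless because snapped clusters grow rather than shrink (they are unions of optimal clusters). The $2^k$ overhead is intrinsic to this black-box scheme: we do not know a priori which facilities in $S$ will be activated by $S^*$, but fixing that choice turns the residual problem into a standard polynomial-time transportation problem, which is why we can afford to try all subsets.
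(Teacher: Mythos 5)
Your proposal is correct and is essentially the paper's own argument: \Cref{thm:lbmed} is stated as a corollary of \Cref{thm:lbfpt}, whose proof in the paper uses exactly your scheme (fix $S$ from $\cA$, enumerate all $2^k$ subsets $T\subseteq S$, solve a min-cost lower-bounded matching/flow on each, and certify the subset $T=\{\near(f^*):f^*\in S^*\}$ via the same ``snapped clusters are unions of optimal clusters, hence have size $\ge L$'' observation and the same triangle-inequality bound). The only piece you omit is the trivial final instantiation giving the stated constants: plug in the polynomial-time $2.676$-approximation of Byrka \etal and the $k^{O(k)}\poly(n)$-time $1.736$-approximation of Cohen-Addad \etal for vanilla $k$-median to obtain $\rho+2 = 4.676$ and $3.736$ respectively.
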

\begin{remark}
As in the case of fair clustering, \Cref{thm:lbfpt} holds even when there are more general constraints on the centers. 
Therefore, for instance, in $O(2^k\poly(n))$ time, we can get a $34$-approximation for the lower bounded knapsack median problem due to the knapsack median result~\cite{Swamy16},
and a $5$-approximation for the lower bounded center problem even when the total weight of the centers is at most a bound and the set of centers need be an independent set of a matroid~\cite{ChakrabartyN18}.
\end{remark}
\begin{proof}[Proof of~\Cref{thm:lbfpt}]
	The proof is nearly identical to that of~\Cref{fairclutoasgnthm}.
	Given an instance $\cI$ of the \lbclu problem, we first run algorithm $\cA$ to get $(S,\phi)$ with the property
	$\pnorm(S;\phi) \le \rho\cdot \optv(\cI) \le \rho\cdot \optl(\cI)$. 
	
	Now we construct $2^k$ instances of the $b$-matching problem with lower bounds. For \emph{every} subset $T\subseteq S$
	we construct a complete bipartite graph on $(C\cup T)$ with cost of edge $c(v,f) := d(v,f)^p$ for all $v\in C, f\in T$.
	There is a lower bound of $1$ on every $v\in C$ and $L$ on every $f\in T$. We find a minimum cost matching satisfying these.
	Given a matching $M$, the assignment is the natural one: $\phi(v) = f$ if $(v,f)\in M$. We return the assignment $\hat{\phi}$ of minimum cost
	among these $2^k$ possibilities. Note that $\pnorm(T;\hat{\phi})$ equals the $(1/p)$th power of the cost of this matching.
	
	To analyze the above algorithm, we need to show the existence of some subset $T\subseteq S$ such that the minimum cost lower bounded matching $M$ has 
	$c(M)^{1/p} \leq (\rho+2) \optl(\cI)$. The proof is very similar to the proof of~\Cref{lemma:lowcost}. 
%	Let $\cI$ be the given \lbclu instance with optimal solution 
Suppose $(S^*,\rho^*)$ is the optimal solution for $\cI$ of cost $\optl(\cI)$. 
%Define $\cI'$ to be the vanilla
%version (obtained by dropping the lower-bound constraints) and let $(S,\phi)$ be the output
%of algorithm $\cA$ on $\cI'$. Note that cost of $(S,\phi)$ for $\cI'$ is
%at most $\approximate\optl$ but since $\phi$ does not necessarily satisfy the
%lower-bounds, $(S,\phi)$ might not be feasible for $\cI$.
%
For any $f^* \in S^*$, define $\near(f^*)$ to be its closest facility in $S$.
Let $T := \{f \in S: \exists f^* \in S^* \text{s.t. } f = \near(f^*)\}$; by definition $T \subseteq S$.
Define the matching where for each $v\in C$ we match $v$ to $\near(\phi^*(v)) \in T$. 
As in the proof of~\Cref{lemma:lowcost}, one can show that $d(v,\near(\phi^*(v))) \le 2d(v,\phi^*(v)) + d(v,\phi(v))$.
Thus, the $1/p$th power of the cost of the matching is at most $(\rho+2)\cdot\optl(\cI)$.
Furthermore, for every $f\in T$, the number of clients assigned to it is at least the number of clients assigned to an $f^*$ with $\near(f^*) = f$.
This is $\geq L$.
\end{proof}

%\begin{remark}
%For the special case of the $\infty$-norm (the \kcenter problem), we can prove that indeed, $T = S$ and the algorithm runs in polynomial time.
%\end{remark}

% A natural step at this point is to consider the lower-bound constraints together with the fairness constraints. Combining the results from \cref{thm:lbfpt}, \cref{thm:fairAssgn}, and \cref{thm:fairclutoasgnthm} we get the following result on the \lbfairclu problem:

% \begin{theorem}\label{thm:lbfairclu}
% Given a $\approximate$-approximate algorithm $\cA$ for the \clu problem,
% there is a $(\approximate+2,4\Delta+4)$-bicriteria approximation 
% algorithm for the \lbfairclu problem, where the lower-bound constraints are violated by at most an additive factor of $(2\Delta+2)$, using at most $2^k$ many calls to the algorithm of \cref{thm:fairAssgn}.
% \end{theorem}
% \begin{proof}
% The algorithm is as follows: Given instance $\cI$ of the \lbfairclu problem,
% let $\cI'$ be the corresponding vanilla instance.
% Run $\cA$ on $\cI'$ to get solution $(S, \phi)$. For any subset $T$ of $S$, construct the corresponding lower-bounded fair assignment instance (which is the corresponding fair assignment plus lower-bound constraints). Run the algorithm of \cref{thm:fairAssgn} on this assignment problem to get a solution $\phi_T$ (or null, if the instance is infeasible). In the end, return the best pair $(T,\phi_T)$ found.

% From both \cref{thm:lbfairclu} and \cref{thm:fairclutoasgnthm} we know that if $\cI$ is feasible, our algorithm returns a solution of cost at most $(\)$
% \end{proof}

\begin{remark}
	Finally, we mention that the above prove easily generalizes for the notion of {\em strong privacy} proposed by R\"osner and Schmidt~\cite{RS18}.
	In this, the client set is {\em partitioned} into groups $C_1,\ldots, C_\ell$, and the goal is to assign clients to open facilities such that
	for every facility the number of clients from a group $C_i$ is at least $L_i$. We can generalize~\Cref{thm:lbfpt} to get a $(\rho+2)$-approximation 
	algorithm for this problem running in $O(2^k\poly(n))$ time whenever there was a $\rho$-approximation possible for the vanilla clustering version.
%	
%	
%	Even though our techniques do not generalize for non-uniform lower-bounds across facilities, we get another form of non-uniform lower-bounds concerning what is called ``strong privacy'' in~\cite{RS18}. 
%\Cref{thm:lbfpt}% and \cref{thm:lbfairclu}
%can be generalized for the case of non-uniform lower-bound requirements across $C_1,C_2,\dots,C_{\ell}$. That is, instead of $L$, we are given $L_1,L_2,\dots,L_{\ell}$ and the lower-bound constraints are:
%\begin{equation*}
%    \lvert \{v \in C_i : \phi(v)=f \} \rvert 
%            \geq L_i \ \ \ 
%             \forall f\in S \,,\forall i \in [\ell] \,.
%\end{equation*}
\end{remark}

\section{Experiments}
\label{sec:experiments}
\graphicspath{./figures}

\def\bank{{\tt bank}~}
\def\census{{\tt census}~}
\def\creditcard{{\tt creditcard}~}
\def\cens1990{{\tt census1990}~}
% Problems:
%  - vanilla_kcenter graph has no good partial solution
%  - delta_kcenter graph for creditcard makes no sense
%  - reference to eqn in Fig. 1 caption
%{\bf (M: refer to time analysis somewhere?)(M: Maybe mention $\Delta=2$ and $\delta=0.2$ unless otherwise specified? Nick: We now explicitly state $\Delta = 2$ in Datasets section and $\delta = 0.2$ in Measurements section.)}
In this section, we perform empirical evaluation
of our algorithm. Based on our experiments, we report five key findings:
\begin{itemize}
     \item Vanilla clustering algorithms are quite {\em unfair} even when measured against relaxed settings of $\alpha$ and $\beta$. In contrast,
     our algorithm's additive violation is almost always less than 3, even with $\Delta=2$, across a wide range of parameters
     (see~\cref{sec:fairness_vs_vanilla}).
     \item The cost of our fair clustering is at most 15\% more than (unfair) vanilla cost for $k \leq 10$ as in \cref{fig:cost_vs_van}. In fact, we see (in ~\cref{fig:cost_vs_almostfair}) that our algorithm's cost is \emph{very close} to the {\em absolute best} fair clustering that allows additive violations!
     %, obtained via solving a linear program
         % This behavior is observed across multiple objective (k-center, k-median, and k-means); 
    Furthermore, our results for $k$-median significantly improve over the costs reported in Chierichetti \etal\cite{CKLV18} and Backurs \etal\cite{Backurs2019} (see~\Cref{table:cost_vs_cklv}
).
     \item For the case of overlapping protected groups ($\Delta > 1$), enforcing fairness with respect to one sensitive attribute (say gender) can lead to 
     unfairness with respect to another (say race). This empirical evidence stresses the importance of considering $\Delta > 1$ (see~\cref{fig:additive_fig} in~\Cref{subsec:appendix:overlap}). 
     \item In~\cref{sec:runtime}, we provide experiments to gauge the running time of our algorithm in practice on large datasets. Even though the focus of this paper is not optimizing the running time, we observe that our algorithm for the k-means objective finds a fair solution for the \cens1990 dataset with 500K points and 13 features in less than 30 minutes (see~\Cref{table:timing}).
     \item Finally, we study how the cost of our fair clustering algorithm changes with the strictness of the fairness conditions. This enables the user to figure out the trade-offs between fairness and utility and make an informed decision about which threshold to choose (see~\Cref{sec:appendix_tuning}). 
 \end{itemize}

% our algorithm finds fair assignments on datasets where vanilla clustering does not. Second, our clustering cost is very close to optimum in practice. Third, our algorithm is able to find these fair assignments with respect to multiple attributes. Finally, our algorithm allows for an adjustment of the strictness of the fairness conditions.

% {\bf deepc: let's use maybe a slightly separate font for representing the datasets. I suggest using the following macros}
\textbf{Settings.} We implement our algorithm in Python 3.6 and run
all our experiments on a Macbook Air with a 1.8 GHz Intel Core i5 Processor and 8 GB 1600 MHz DDR3 memory. We use CPLEX\cite{CPLEX} for solving LP's. Our codes are available on GitHub\footnote{\href{https://github.com/nicolasjulioflores/fair_algorithms_for_clustering}{https://github.com/nicolasjulioflores/fair\textunderscore algorithms\textunderscore for\textunderscore clustering}} for public use.

\textbf{Datasets.} We use four datasets from the UCI repository~\cite{Dua:2019}:
\footnote{\href{https://archive.ics.uci.edu/ml/datasets/}{https://archive.ics.uci.edu/ml/datasets/}}
\begin{inparaenum}[\bfseries (1)]
\item \bank~\cite{bank-paper} with 4,521 points, corresponding to phone calls from a marketing campaign by a
Portuguese banking institution.
\item \census~\cite{census-paper} with 32,561 points, representing information about individuals extracted from the 1994 US \census.
\item \creditcard~\cite{creditcard-paper} with 30,000 points, related to information on credit card holders from a certain credit card in Taiwan.
\item \cens1990~\cite{meek2002learningcurve} with 2,458,285 points, taken from the 1990 US Census, which we use for run time analysis. 
%\footnote{\href{https://archive.ics.uci.edu/ml/datasets/Bank+Marketing}{https://archive.ics.uci.edu/ml/datasets/Bank+Marketing}}
%\footnote{\href{https://archive.ics.uci.edu/ml/datasets/adult}{https://archive.ics.uci.edu/ml/datasets/adult}}
%\footnote{\href{https://archive.ics.uci.edu/ml/datasets/default+of+credit+card+clients}{https://archive.ics.uci.edu/ml/datasets/default+of+credit+card+clients}}
%\footnote{\href{https://archive.ics.uci.edu/ml/datasets/US+Census+Data+\%281990\%29}{https://archive.ics.uci.edu/ml/datasets/US+Census+Data+\%281990\%29}}
\end{inparaenum}
For each of the datasets, we select a set of numerical attributes to represent the records in the Euclidean space. We also choose two sensitive 
attributes for each dataset (e.g. sex and race for \census) and create protected groups based on their values.  ~\Cref{table:data} contains a more detailed description of the datasets and our features.

\begin{table}[!ht]
\centering
\caption{For each dataset, the coordinates are the numeric attributes used to determined the position of each record in the Euclidean space. The sensitive attributes determines protected groups.}
\begin{tabular}{ llrl }
\rule{0pt}{4ex}
\textbf{Dataset}& \textbf{Coordinates} & \textbf{\shortstack{Sensitive\\  attributes}}&\textbf{Protected groups} \\ 
 \hline 
 \rule{0pt}{2.5ex}
 \multirow{2}*\textbf{\bank}
 & age, balance, duration & marital & married, single, divorced \\
 \cmidrule(r){3-4}
 &  & default &  yes, no \\
 \hline 
 \rule{0pt}{2.5ex}
 \multirow{3}*\textbf{\census}
 & age, education-num,  & sex & female, male\\
 \cmidrule(r){3-4}
  &final-weight, capital-gain, & race & Amer-ind, asian-pac-isl,\\
  \rule{0pt}{2.5ex}
   & hours-per-week && black, other, white\\
   \hline
   \rule{0pt}{2.5ex}
  \multirow{2}*\textbf{\creditcard} & age, bill-amt 1 --- 6, & marriage & married, single, other, null \\
  \cmidrule(r){3-4}
    & limit-bal, pay-amt 1 --- 6  & education& 7 groups\\
    \hline
    \rule{0pt}{2.5ex}
    \multirow{4}*\textbf{\cens1990}
    & dAncstry1, dAncstry2, iAvail, & dAge & $8$ groups \\
    \cmidrule(r){3-4}
    & iCitizen, iClass, dDepart, iFertil,  & iSex & female, male\\
    \rule{0pt}{2.5ex}
    & iDisabl1, iDisabl2, iEnglish,  \\
    \rule{0pt}{2.5ex}
    & iFeb55, dHispanic, dHour89 \\
    \hline
    % LIMIT_BAL,AGE,BILL_AMT1,BILL_AMT2,BILL_AMT3,BILL_AMT4,BILL_AMT5,BILL_AMT6,PAY_AMT1,PAY_AMT2,PAY_AMT3,PAY_AMT4,PAY_AMT5,PAY_AMT6
\end{tabular}
\label{table:data}
\end{table}

\textbf{Measurements.} 
% For any clustering, there are two metrics we focus on. 
% One is the \emph{cost of fairness}, that is, the ratio of the objective values of the fair clustering that we find and the vanilla clustering. Note that this depends on the objective chosen.
% The other is the ``unfairness''. To boil everything down to one statistic, we define the \emph{balance} of a clustering.
% This notion is motivated (and indeed generalizes since we can have multiple classes) by the one found in Chierichetti \etal~\cite{CKLV18}.
% for the case when there are more than two color classes, such as when we are balancing with respect to race. To assess the quality of solutions we define the balance of a cluster as the minimum balance of any color in a cluster for all colors, for all attributes.
For any clustering, we mainly focus on two metrics. One is the \emph{cost of fairness}, that is, the ratio of the objective values of the fair clustering over the vanilla clustering. The other is \emph{balance}, the measure of unfairness. To define balance, we generalize the notion found by Chierichetti \etal~\cite{CKLV18},
% we use the terms defined earlier in the paper. $C_i$ as the set of points belonging to group $i$, $C(f)$ as the set of points belonging to cluster $f$ and $C_i(f)$ as $C(f) \cap C_i$.
We define two intermediate values $r_i$, the representation of group $i$ in the dataset and $r_i(f)$, the representation of group $i$ in cluster $f$ as
$r_i := {| C_i |}/{| C |}$ and
$r_i(f) := {| C_i(f) |}/{| C(f) |}$.
Using these two values, balance is defined as
$
\balance(f) := \min\{r_i/r_i(f) ,r_i(f)/r_i\}~~ \forall i \in [\ell]
$.
Although in theory the values of $\alpha$, $\beta$ for a given group $i$ can be set arbitrarily, in practice they are best set with respect to $r_i$, the ratio of the group in the dataset. Furthermore, to reduce the degrees of freedom, we parameterize $\beta$ and $\alpha$ by a single variable $\delta$ such that $\beta_i = r_i(1 - \delta)$ and $\alpha_i = r_i/(1 - \delta)$. Thus, we can interpret $\delta$ as how loose our fairness condition is. This is because $\delta = 0$ corresponds to each group in each cluster having exactly the same ratio as that group in the dataset, and  $\delta = 1$ corresponds to no fairness constraints at all. For all of the experiments, 
we set $\delta=0.2$ (corresponding to the common interpretation of the $80\%$-rule of DI doctrine), and
use $\Delta=2$, unless otherwise specified.

\textbf{Algorithms.} 
For vanilla $k$-center, we use a $2$-approx. algorithm due to Gonzalez~\cite{Gonzalez85}.
For vanilla $k$-median, we use the single-swap $5$-approx. algorithm by Arya \etal~\cite{AryaGKMMP04}, augment it with the $D$-sampling procedure by~\cite{AV07} for initial center section, and take the best out of $5$ trials. For $k$-means, we use the $k$-means++ implementation of~\cite{scikit-learn}. 

\subsection{Fairness comparison with vanilla clustering}\label{sec:fairness_vs_vanilla}
% Figure 1 : Comparing partial vs. vanilla vs. our algo.
In~\cref{fig:vanilla_fig} we motivate our discussion of fairness 
by demonstrating the unfairness of vanilla clustering and fairness of our algorithm. 
On the $x$-axis, we compare three solutions: (1) our algorithm (labelled ``ALG''),
(2) fractional solution to the \fairCluAsgn LP in ~\Cref{eqn:violateLP} (labelled
``Partial''), and (3) vanilla $k$-means (labelled ``VC'').
Below these labels, we record the \emph{cost of fairness}.
We set $\delta = 0.2$ 
and $k=4$. Along the $y$ axis, we plot the balance metric defined above 
for the three largest clusters for each of these clustering. The dotted line at $0.8$ is the goal balance for $\delta = 0.2$. The lowest balance for any cluster for our algorithm is $0.75$ (for \census), whereas vanilla can be as bad as $0$ (for \bank); ``partial'' is, of course, always fair (at least $0.8$).
\begin{figure}[!ht]
\centering
\includegraphics[width=0.9\columnwidth]{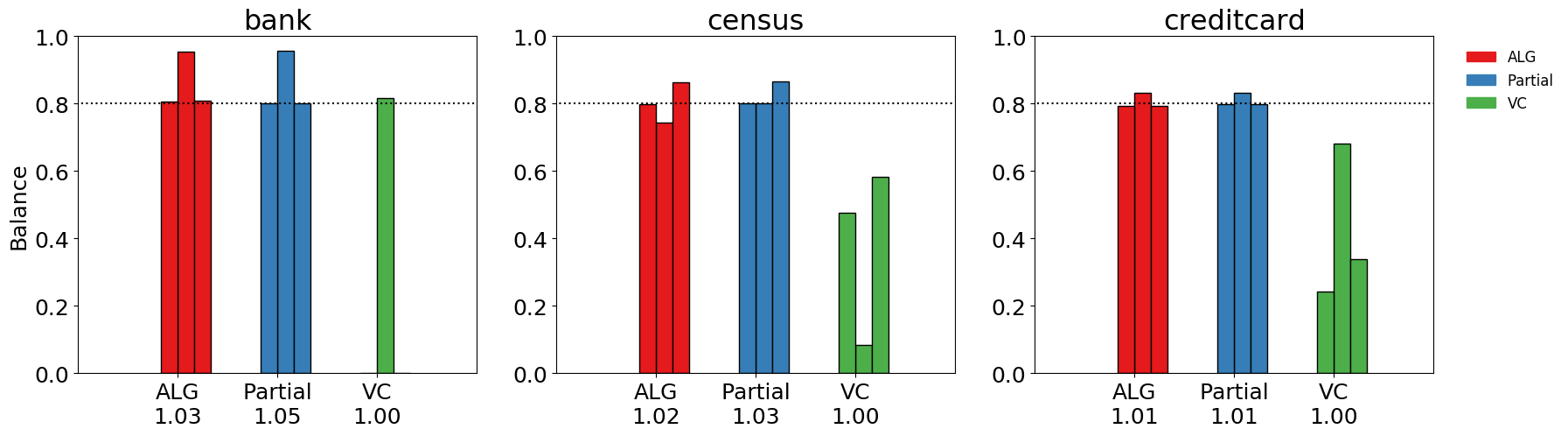}
\caption{Comparison of our algorithm (ALG) versus vanilla clustering (VC) in terms of $\balance$ for the $k$-means objective. }
\label{fig:vanilla_fig}
\end{figure}
We observe that the maximum additive violation of our algorithm is only $3$ (much better than our theoretical bound of $4\Delta+3)$),
for a large range of values of $\delta$ and $k$,
whereas vanilla $k$-means can be unfair by quite a large margin.
(see~\cref{fig:additive_fig} below and~\Cref{table:additive}).
% Figure 2 : Additive violation
\begin{figure}[!ht]
\includegraphics[width=0.9\columnwidth]{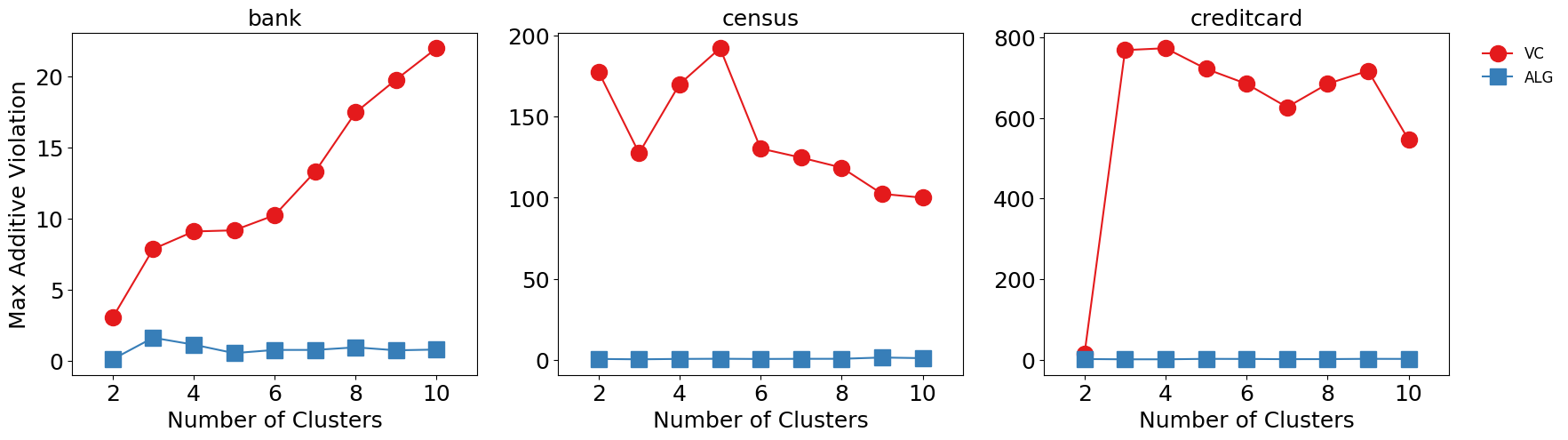}
\centering
\caption{Comparison of  the maximum additive violation (for $\delta = 0.2$ and $\Delta = 2$) over all clusters and all groups between our algorithm (ALG) and vanilla (VC), using the $k$-means objective.}
\label{fig:additive_fig}
\end{figure}

\begin{table}[!ht]
\caption{The maximum additive violation across a range of $\delta$ of our algorithm compared to vanilla $k$-means. For each $\delta$, we take maximum over $k$, for $k \in [2,10]$ on all datasets.}
\centering
\begin{tabular}{  l c c c c c c c  c }

\textbf{$\delta$} \Tstrut & 0.01 & 0.05 & 0.1 & 0.2 & 0.3 & 0.4 & 0.5 & Vanilla ($\delta = 0.2$) \\ [0.5ex]
 \hline
\textbf{\bank} \Tstrut & 1.45 & 1.17 & 1.39 & \textbf{1.54} & 1.19 & 1.15 & 1.03 & \textbf{21.99} \\ [1ex]
\textbf{\census} & 1.44 & 1.53 & \textbf{1.89} & 1.08 & 1.18 & 0.97 & 1.03 & \textbf{773.19} \\  [1ex]
\textbf{\creditcard} & \textbf{3.02} & 2.32 & 2.11 & 2.29 & 2.03 & 1.63 & 1.03 & \textbf{192.01} \\ [1ex]
 \hline
\end{tabular}
\label{table:additive}
\end{table}

\subsection{Cost analysis} We evaluate the cost of our algorithm for $k$-means objective with respect to the vanilla clustering cost. \Cref{fig:cost_vs_van} shows that the cost of our algorithm for $k \leq 10$ is at most 15\% more than the vanilla cost for all datasets. Interestingly, for \creditcard, even though the vanilla solution is extremely unfair as demonstrated earlier, cost of fairness is at most 6\% which indicates that the vanilla centers are in the ``right place''.
%  Here, $\delta = 0.2$
\begin{figure}[!ht]
  \centering
    \includegraphics[width=0.3\columnwidth]{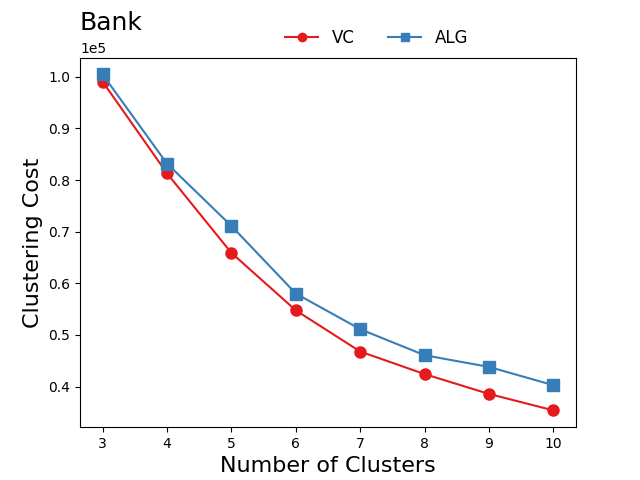}
    \includegraphics[width=0.3\columnwidth]{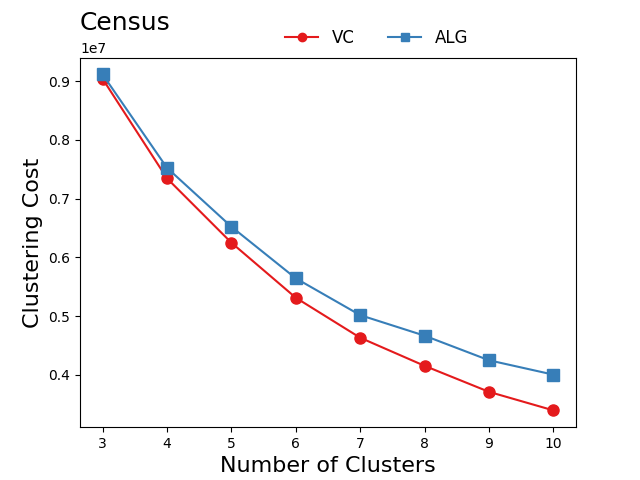}
    \includegraphics[width=0.3\columnwidth]{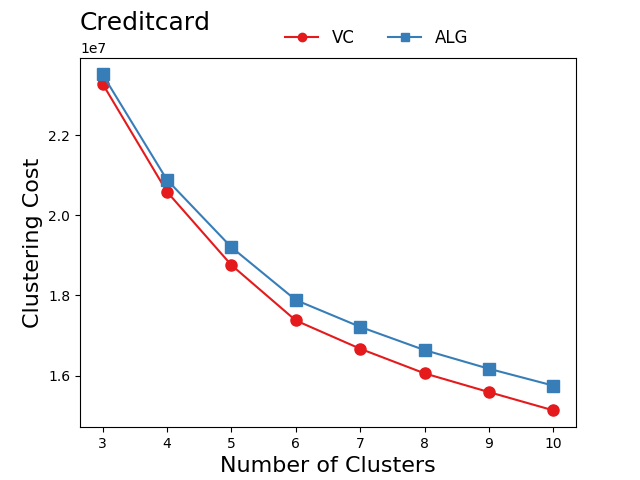}
  \caption{Our algorithm's cost (ALG) versus the vanilla clustering cost (VC) for $k$-means objective.}
  \label{fig:cost_vs_van}
\end{figure}

Our results in Table \ref{table:cost_vs_cklv} confirm that we outperform \cite{CKLV18} and \cite{Backurs2019} in terms of cost. To match \cite{CKLV18} and \cite{Backurs2019}, we sub-sample \bank and \census to 1000 and 600 respectively, declared only one sensitive attribute for each (i.e. marital for \bank and sex for \census), and tune the fairness parameters to enforce a balance of $0.5$. The data in \cref{table:cost_vs_cklv} is from \cite{Backurs2019} which is only available for $k$-median and $k = 20$.

\begin{table}[!ht]
\centering
\caption{Comparison of our clustering cost with \cite{Backurs2019} and \cite{CKLV18} for $k$-median ($k=20$).}
\begin{tabular}{ llll }
\rule{0pt}{3ex}
Dataset& \textbf{Ours} & Backurs \etal \cite{Backurs2019} & Chierichetti \etal\cite{CKLV18}  \\ [0.5ex]
 \hline
 \rule{0pt}{2ex}
 \textbf{\bank} & \boldmath$2.43 \times 10^5$ & $6.03 \times 10^5$ & $5.55 \times 10^5$\\
%  \hline
 \textbf{\census} & \boldmath$4.24\times 10^6$ & $24.1 \times 10^6$ & $36.5 \times 10^6$\\
 \hline
\end{tabular}
\label{table:cost_vs_cklv}
\end{table}

\def\LPFAIR{\mathsf{LP3}}
Next, we evaluate the cost of our algorithm for $k$-means objective with respect to the vanilla clustering cost and the \emph{almost fair LP} cost. The almost fair LP (\cref{eqn:almostfairLP}) is an LP relaxation of \fairClu, with variables for choosing the centers, except that we allow for a $\lambda$ additive violation in fairness. The cost of this LP is a lower-bound on the cost of \emph{any} fair clustering that violates fairness by at most an additive factor of $\lambda$.

\begin{subequations}
\label{eqn:almostfairLP} 
\begin{alignat}{6}
\LPFAIR := \min & \sum\limits_{v \in C, f\in S}d(v,f)^px_{v,f} && \qquad x_{v,f} \in [0,1],~~ \forall v\in C, f\in S \label{eq:fairlp}\\ 
    & \sum_{f \in S} x_{v,f}
    ~~=~~1 && \qquad\forall v \in C \label{eqn:fairlp:P3}  \\
   & x_{v,f}
    ~~\leq~~y_f && \qquad\forall v \in C, f\in S \label{eqn:fairlp:P4} \\ 
   & \sum_{f \in S} y_f
    ~~\leq~~k && \qquad \label{eqn:fairlp:P5} \\
     & \sum_{v\in C_i} x_{v,f}  
     ~~\leq~~ \alpha_i \sum\limits_{v\in C} x_{v,f} + \lambda &&\qquad \forall f \in S, \forall i \in [\ell]   \label{eqn:fairlp:violate:P1}  \\
    &\sum_{v\in C_i} x_{v,f} 
    ~~\geq~~ \beta_i \sum\limits_{v\in C} x_{v,f} - \lambda   && \qquad\forall f \in S, \forall i \in [\ell]   \label{eqn:fairlp:violate:P2} 
\end{alignat}
\end{subequations}

In \cref{fig:cost_vs_almostfair} we compare the cost of our algorithm with a lower-bound on the absolute best cost of any clustering that has the same amount of violation as ours. To be more precise, for any dataset we set $\lambda$ according to the maximum violation of our algorithm reported in \cref{table:additive} for $\delta = 0.2$ (e.g. $\lambda$ is $1.54$ for \bank, $1.08$ for \census, and $2.29$ for \creditcard). Then, we solve the almost fair LP for that $\lambda$ and compare its cost with our algorithm's cost over that dataset.

Since solving the almost fair LP on the whole data is infeasible (in terms of running time), we sub-sample \bank, \census, and \creditcard to 1000, 600 and 600 points respectively, and report the average costs over 10 trials. Also, we only consider one sensitive attribute, namely marital for \bank, sex for \census and education for \creditcard to further simplify the LP and decrease the running time. \cref{fig:cost_vs_van} shows that the cost of our algorithm is very close to the almost fair LP cost (at most 15\% more). Note that, since the cost of almost fair LP is a lower bound on the cost of \fairClu problem, we conclude that our cost is at most 15\% more than optimum in practice, which is much better than the proved $(\rho + 2)$ factor in \cref{thm:fairclu}.

\begin{figure}[!ht]
  \centering
  \includegraphics[width=0.3\columnwidth]{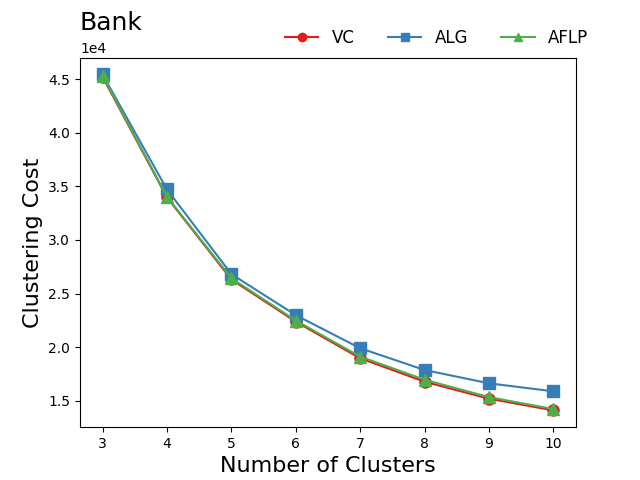}
  \includegraphics[width=0.3\columnwidth]{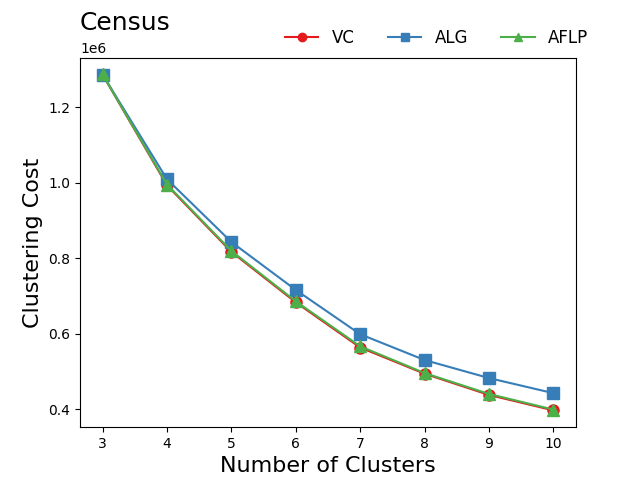}
    \includegraphics[width=0.3\columnwidth]{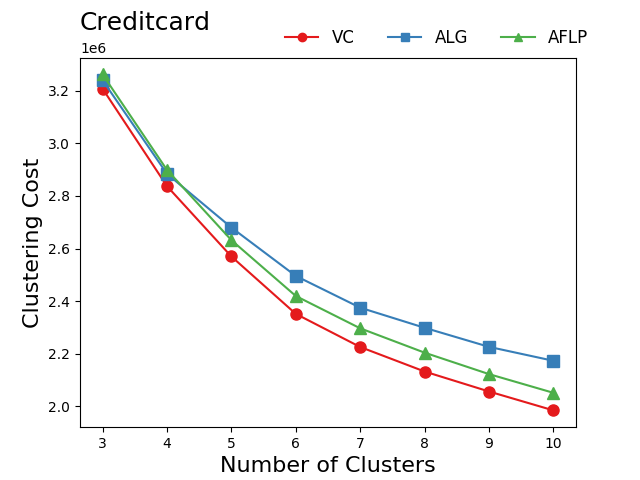}
  \caption{Average costs of vanilla clustering (VC), our algorithm (ALG), and almost fair LP (AFLP), for $k$-means objective, as a function of $k$.}
  \label{fig:cost_vs_almostfair}
\end{figure}

\subsection{The case of $\Delta > 1$}
\label{subsec:appendix:overlap}
In this section, we demonstrate the importance of considering $\Delta>1$ by
showing that enforcing fairness with respect to one
attribute (say gender) may lead to significant unfairness with respect to
another attribute (say race). In Figure \ref{fig:cf_fig}, we have
two plots for each dataset. 
In each plot, we compare three clustering: (1) Our algorithm 
with $\Delta=2$ (labelled ``both''); (2) and (3) Our algorithm with 
$\Delta=1$ with protected groups defined by the attribute on $x$-axis label. We set $\delta=0.2$ and $k=4$.  The clustering objective is $k$-means. 
Along $y$-axis, we measure the 
\emph{balance} metric for the three largest clusters for each of these
clustering. In each plot we only measure the {\em balance} for the 
attribute written in bold in the top right corner.

In datasets, such as \bank, we see that fairness with respect to only the marital attribute leads to a large amount of unfairness in the default attribute.  The fairest solution along both attributes is when they are both considered by our algorithm ($\Delta = 2$). Interestingly, there are datasets where fairness by one attribute is all that is needed. On the \census dataset, fairness by race leads to a fair solution on sex, but fairness
by sex leads to large amount of unfairness in race.

Finally, our results strongly suggest that finding a fair solution for two attributes is often only slightly more expensive (in terms of the clustering objective) than finding a fair solution for only one attribute.

\begin{figure}[!ht]
\includegraphics[width=\columnwidth]{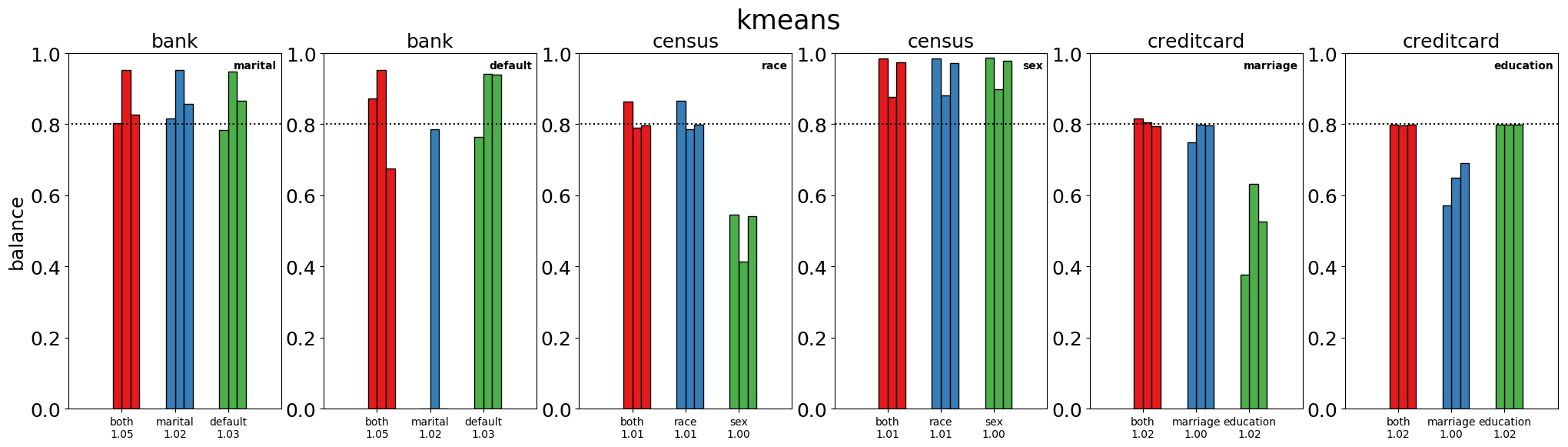}
\centering
\caption{Importance of considering $\Delta > 1$. Below these x labels is the \emph{cost of fairness} ratio. We report the \emph{balance} for the three largest clusters and include the dotted line at $0.8$ because we use $\delta = 0.2$.}
\label{fig:cf_fig}
\end{figure}

\subsection{Run time analysis}\label{sec:runtime}
In this paper, we focus on providing a framework and do not emphasize on 
run time optimization. Nevertheless, we note that our algorithm for the k-means objective finds a fair solution for the \cens1990 dataset with 500K points and 13 features in less than 30 minutes (see~\Cref{table:timing}). 
Even though our approach is based on iterative rounding method, 
in practice CPLEX solution to $\LP$ (\cref{eqn:violateLP})
is more than 99\% integral for each of our experiments. 
Hence, we never have to solve more than two or three LP. Also 
the number of variables in subsequent LPs are significantly small.
In contrast, if we attempt to frame $\LP$ (\cref{eqn:violateLP}) as an integer program instead, the CPLEX solver fails to find a solution in under an hour even with 40K points.

% \begin{figure}[h]
%     \includegraphics[width=0.4\columnwidth]{figures/timing.png}
%     \caption{We assess the running time of our algorithm by subsampling the \cens1990 dataset for up to 500,000 points.}
%     \label{fig:timing}
% \end{figure}

\begin{table}[!ht]
\caption{Runtime of our algorithm on subsampled data from \cens1990 for $k$-means ($k=3$).}
\centering
\begin{tabular}{  r  lllllll  }
\\\hline
 \rule{0pt}{2.5ex}
 \textbf{Number of sampled points} & 10K & 50K & 100K & 200K & 300K & 400K & 500K \\ 
 \textbf{Time (sec)} & 4.04 & 33.35 & 91.15 & 248.11 & 714.73 & 1202.89 & 1776.51 \Bstrut\\
 \hline
\end{tabular}
\label{table:timing}
\end{table}

\subsection{Tuning the fairness parameters}\label{sec:appendix_tuning}
In Figure \ref{fig:delta_fig}, we demonstrate the ability to tune the strictness of the fairness criteria by manipulating the parameter $\delta$. As $\delta$ approaches $1$, the ratio between the fair objective and original vanilla objective decreases to $1$. This suggests that the fair solution has recapitulated the vanilla clustering because our bounds are lax enough to do so.

% Figure 4 : Varying the strictness of our algorithm
\begin{figure}[h!]
\includegraphics[width =\columnwidth]{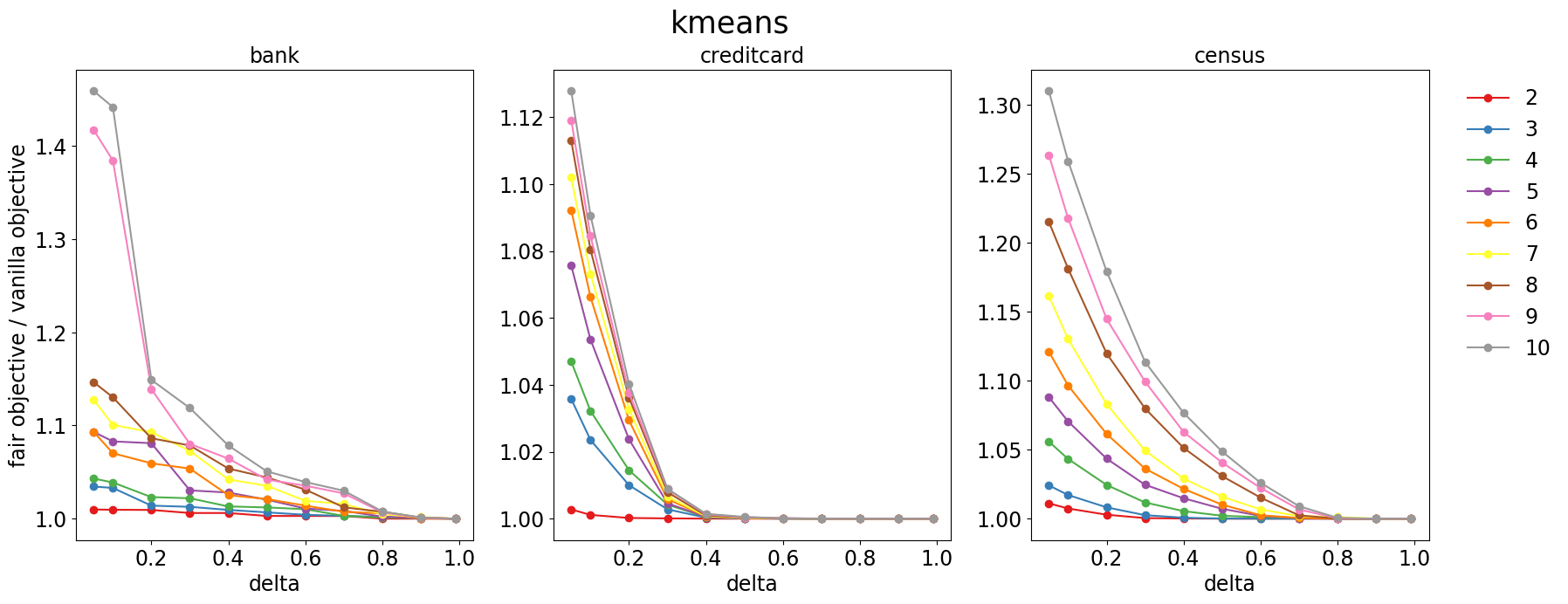}
\centering
\caption{We show the effects of varying $\delta$ (x-axis) on our algorithm's fair objective cost over the vanilla cost (y-axis).}
\label{fig:delta_fig}
\end{figure}

%
% ---- Bibliography ----
%
% BibTeX users should specify bibliography style 'splncs04'.
% References will then be sorted and formatted in the correct style.
%
%  \bibliographystyle{splncs04}
\newpage
 \bibliographystyle{plain}
 \bibliography{refs}

\end{document}